\def\dref#1{(\ref{#1})}
\begin{document}
\newtheorem{assumption}{Assumption}
\newtheorem{lemma}{Lemma}
\newtheorem{theorem}{Theorem}
\newtheorem{corollary}{Corollary}
\newtheorem{remark}{Remark}
%
\title{Fully Distributed Adaptive Output Feedback Protocols
for Linear Multi-Agent Systems with Directed Graphs: A Sequential Observer Design Approach}


\author{Yuezu Lv,
Zhongkui Li,
Zhisheng Duan, and Jie Chen
\thanks{This work was supported in part by the National Natural Science Foundation of China under grants 61473005, 11332001, 61225013,
and in part by the Hong Kong RGC under the project CityU 111810, CityU 111511.}
\thanks{Y. Lv, Z. Li, and Z. Duan are with the State Key Laboratory for Turbulence and Complex Systems, Department of Mechanics and Engineering Science, College of Engineering, Peking University, Beijing 100871, China. E-mail: {\tt\{yzlv,zhongkli,duanzs\}@pku.edu.cn}}
\thanks{J. Chen is with the Department of Electronic Engineering, City University of Hong Kong, Kowloon, Hong Kong.
E-mail: {\tt jichen@cityu.edu.hk}}

}

%



\IEEEtitleabstractindextext{%
\begin{abstract}
This paper studies output feedback consensus protocol design problems for linear multi-agent systems with directed graphs.
 We consider both leaderless and leader-follower consensus with a leader whose control input is nonzero and bounded.
We propose a novel sequential observer design approach, which makes it possible to design fully distributed adaptive output feedback protocols
that the existing methods fail to accomplish. With the sequential observer architecture,
we show that leaderless consensus can be achieved for any strongly connected directed graph
in a fully distributed manner, whenever the agents are stabilizable and detectable.
For the case with a leader of bounded control input,
we further present novel distributed adaptive output feedback protocols, which include
nonlinear functions to deal with the effect of the leaders's nonzero control input
and are able to achieve leader-follower consensus for any directed graph containing a directed spanning tree with the leader as the root.
\end{abstract}

\begin{IEEEkeywords}
Multi-agent systems, consensus, output feedback control, adaptive control.
\end{IEEEkeywords}}

\maketitle

\IEEEdisplaynontitleabstractindextext

%
\IEEEpeerreviewmaketitle

\section{Introduction}

Over the past decade, the consensus control problem of multi-agent systems has emerged as a focal research topic in the field of control, due to its various applications to, e.g., UAV formation flying, multi-point surveillance, and distributed reconfigurable sensor networks \cite{ren2007information,antonelli2013interconnected}.
Considerable work from different perspectives has been conducted on consensus and other related cooperative control problems; see the recent papers \cite{olfati-saber2004consensus,ren2005consensus,li2010consensus,ren2007information,cao2013overview,antonelli2013interconnected},
the monographs \cite{RenBeard07_Springer,li2014cooperative}, and the references therein.

Existing consensus algorithms can be essentially divided into two broad categories,
namely, consensus without a leader (i.e., leaderless consensus) and
consensus with a leader, whereas the latter ia also called leader-follower consensus
or distributed tracking.
In a leader-follower consensus problem, it is often the case that the leader may need to implement its own control
actions to achieve certain objectives, e.g., to reach a desirable consensus trajectory or to avoid hazardous
obstacles. Thus, compared to leaderless consensus,
an additional difficulty with leader-follower consensus lies in how to deal with the effect of
the leader's control input which is available to at most a small subset of the followers.

\subsection{Motivations and Related Works}

A central task in consensus studies is to design distributed consensus protocols based on only the local information of each agent and its neighbors to ensure that the states of the agents reach an agreement. 
In most of the previous works on consensus, e.g., \cite{ren2008consensus,ren2007high-order, li2010consensus,zhang2011optimal,seo2009consensus,yu2010second},
which address the consensus problem of
integrator-type, general linear and Lipschitz nonlinear
multi-agent systems,
the design of the consensus protocols requires the knowledge of certain connectivity of the communication graph.
The connectivity for undirected graphs can be
measured by the smallest nonzero eigenvalue of the corresponding Laplacian matrix \cite{olfati-saber2004consensus},
while for directed graphs the smallest positive real part of
the eigenvalues of the Laplacian matrix \cite{li2010consensus} or other similar
quantities are typically used \cite{yu2010second}.
Since the connectivity measures require
computing the nonzero eigenvalues of the Laplacian matrix associated
with the entire communication graph, the consensus protocols in these works
require in essence global information of the graph
which cannot be determined in a fully distributed manner.

Fully distributed consensus protocols, nevertheless, can be developed by implementing adaptive laws to
dynamically update the coupling weights of neighboring agents, thus removing
the aforementioned requirement on the global eigenvalue information. Such
adaptive consensus protocols are proposed in \cite{li2012adaptiveauto,li2011adaptive} for linear multi-agent systems,
which depend on only local information of each agent and its neighbors.
Similar adaptive schemes are presented in \cite{su2011adaptive,yu2013distributed}
for second-order nonlinear agents.
Note that the adaptive protocols in
\cite{li2012adaptiveauto,li2011adaptive,su2011adaptive,yu2013distributed}
are applicable to only undirected communication graphs.
Due to the asymmetry of the corresponding Laplacian matrices, however,
the development of distributed adaptive consensus protocols
poses a more difficult problem with directed graphs.
By introducing monotonically increasing functions as a means to provide
additional freedom for design,
a distributed adaptive consensus protocol is constructed in \cite{li2014TAC},
for directed graphs containing a directed spanning tree in which the leader is the root node.
Another adaptive protocol is presented in \cite{lv2014Auto}, which can be modified using
the $\sigma$-modification technique
so that it is robust in the presence of bounded external disturbances.
It is worth pointing out, however, that the protocols in \cite{li2014TAC,lv2014Auto} rely on
the relative states of neighboring agents, which may not be available in general.
How to design fully distributed adaptive {\it output feedback} consensus protocols
using only local output information
appears much more challenging. Except those protocols proposed in \cite{li2015Auto} for quite special minimum-phase and relative-degree-one
agents, designing fully distributed output feedback protocols for general linear multi-agent systems
with directed graphs remains to be an open issue.


The aforementioned works are concerned with the leaderless consensus problem or distributed tracking problem for the case where
the leader is of zero control input. The distributed tracking problem in the presence of a leader having a nonzero control input is generally more difficult and has been addressed in
\cite{cao2011distributed,mei2011tracking,mei2012distributed,meng2012leader,li2012adaptiveauto,li2011trackingTAC}. In particular,
the authors in \cite{cao2011distributed} present nonsmooth controllers for first- and second-order integrators
in the absence of velocity or acceleration measurements. The controllers in \cite{cao2011distributed} incorporate discontinuous functions
commonly found in the sliding mode control literature, which are meant to cope with the leader's bounded control input.
The authors in \cite{mei2011tracking,mei2012distributed,meng2012leader} address a distributed coordinated tracking and containment control
problem, for multiple Euler-Lagrange systems with one or more dynamic leaders.
Distributed static and adaptive protocols are given in \cite{li2012adaptiveauto,li2011trackingTAC}
for general linear multi-agent systems with a leader of bounded control input.
It is worth noting that one common assumption in \cite{cao2011distributed,li2012adaptiveauto,li2011trackingTAC}
is that the subgraph among the followers is undirected. The case where this subgraph is directed
remains unsolved for general linear multi-agent systems.
The main obstacle lies in the unpleasant interrelations between the nonlinear functions used to deal with the leader's
control input and the directed subgraph among followers.

\subsection{Our Contributions}

In this paper, we address the distributed adaptive output feedback consensus protocol design problem
for general linear multi-agent systems with directed communication graphs.
In this setting, the relative states of neighboring agents are not available, but only local output information is accessible. Both the cases with
and without a leader of bounded control input are studied. Note that simply combining the techniques for the state feedback case (e.g.,
those proposed in \cite{li2014TAC,lv2014Auto}) and distributed adaptive observer-type protocols (e.g., in \cite{li2012adaptiveauto}) for undirected graphs will not yield distributed adaptive output feedback consensus protocols applicable to general directed graphs. The main reason is that
the monotonically increasing functions introduced in \cite{li2014TAC,lv2014Auto}, when used for observer-type adaptive protocols in \cite{li2012adaptiveauto}, will still depend on the relative states of neighboring agents.
This motivates us to seek for novel methods to design distributed output feedback consensus protocols.

To circumvent the difficulties alluded to above, in this paper we propose a two-step, sequential observer design (SOD) method,
which consists of designing first a local observer and next a distributed observer. Here the local observer is employed to estimate from an agent's output the agent' state, while the distributed observer operates on the local state estimates and generates the control input; neither of these observers uses the state information of the agents.
Utilizing this novel SOD architecture, two types of distributed adaptive output feedback consensus protocols are developed for
the leaderless consensus problem, which exchange the local estimates among neighboring agents via the communication graph and implement adaptive laws
to update the time-varying coupling weights among the agents.
As such, these two adaptive protocols uses only the local output information and achieve leaderless consensus in a fully distributed manner,
for any strongly connected directed graph. This constitutes one of our main contributions in this paper.

Another main result of this paper concerns the leader-follower consensus with a leader of bounded control input.
We propose a distributed discontinuous adaptive output feedback protocol, to solve the consensus problem, which includes
discontinuous nonlinear functions to deal with the effect of the leaders's nonzero control input.
It is shown that the discontinuous adaptive protocols can achieve leader-follower consensus
for any directed graph containing a directed spanning tree with the leader as the root.
To attenuate the chattering phenomenon resulted from discontinuity,
distributed continuous adaptive output feedback protocols are further developed, which can ensure
the ultimate boundedness of the consensus error and the adaptive gains.
The upper bound of the consensus error is explicit derived, which can be made satisfactorily
small by appropriately tuning the design parameters.
Unlike the protocols in the previous works \cite{cao2011distributed,li2012adaptiveauto,li2011trackingTAC},
the adaptive protocols proposed herein appear to be the first available for linear multi-agent systems with general directed graphs.


\subsection{Outline of This Paper}

The rest of this paper is organized as follows. The mathematic preliminaries used in this paper are summarized in Section \ref{s2}.
Distributed adaptive output feedback consensus protocols for general linear systems with strongly connected graphs are presented in Section \ref{s3}. The leader-follower consensus problem for the case with a leader of bounded control input
is studied in Section \ref{s4}.
Numerical simulation results are presented in Section \ref{s5}. Section \ref{s6} concludes our paper.

\section{Mathematical Preliminaries}\label{s2}

{\it Notations}: Throughout this paper, the symbol
$\textbf{1}$ denotes a column vector with all entries equal to 1.
For any square matrix $A$, $\lambda_{\min}(A)$ and $\lambda_{\max}(A)$ represent the minimal and maximal eigenvalues of $A$,
respectively.
A matrix $A =[a_{ij} ]\in \mathbf{R}^{n\times n}$ is
called a nonsingular M-matrix, if $a_{ij} < 0$, $ \forall i \neq j$,
and all eigenvalues of $A$ have positive real parts.

The communication graph among the agents is specified by a directed graph $\mathcal{G}=(\mathcal{V},\mathcal{E})$, where $\mathcal{V}=\{1,\cdots,N\}$ is the set of nodes (each node represents an agent)
and $\mathcal{E}\subseteq \mathcal{V}\times\mathcal{V}$ denotes the set of edges (each edge represents a communication link between
two distinct agents).
An edge $(i,j)\in \mathcal{E}$ represents that node $i$ is a neighbor of node $j$ and node $j$ can have access to the state or output
of node $i$. A directed path from node $i_1$ to node $i_l$ is a sequence of ordered edges in the form
of $(i_k,i_{k+1})$, $k=1,\cdots,l-1$. A directed graph contains a directed spanning tree if there exists a node called the root such that the node has directed paths to all other nodes in the graph. A directed graph is strongly connected if there exists a directed
path between every pair of distinct nodes. A directed graph has a directed spanning tree if it is strongly connected, but not vice versa.

The adjacency matrix associated with the communication graph $\mathcal{G}$, denoted by $\mathcal{A}=[a_{ij}]\in \mathbf{R}^{N\times N}$, is defined as $a_{ii}=0$, $a_{ij}=1$ if $(i,j)\in\mathcal{E}$ and 0 otherwise.
The Laplacian matrix $\mathcal{L}=[l_{ij}]\in \mathbf{R}^{N\times N}$ is defined such that $l_{ii}=\sum_{j=1}^{N}a_{ij}$ and $l_{ij}=-a_{ij}$, $i\neq j$. We summarize below key facts relevant to our subsequent developments.


\begin{lemma}[\cite{ren2005consensus}]\label{lem1}
Zero is an eigenvalue of $\mathcal {L}$ with $\mathbf{1}$ as a
right eigenvector and all nonzero eigenvalues have positive real
parts. Besides, zero is a simple eigenvalue of $\mathcal {L}$ if
and only if $\mathcal {G}$ has a directed spanning tree.
\end{lemma}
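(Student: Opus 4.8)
The plan is to establish the three assertions in turn, disposing of the two spectral facts first and reserving the simplicity characterization for last. First I would verify that $0$ is an eigenvalue with right eigenvector $\mathbf{1}$ by a direct computation: by construction each row of $\mathcal{L}$ sums to zero, since $l_{ii}=\sum_{j\neq i}a_{ij}$ and $l_{ij}=-a_{ij}$ give $\sum_{j=1}^{N}l_{ij}=0$ for every $i$, so that $\mathcal{L}\mathbf{1}=0$.

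Next, to locate the remaining eigenvalues I would invoke the Gershgorin disc theorem. The $i$-th disc is centered at $l_{ii}=\sum_{j\neq i}a_{ij}\ge 0$ with radius $\sum_{j\neq i}|l_{ij}|=\sum_{j\neq i}a_{ij}=l_{ii}$, so each disc is contained in $\{z\in\mathbf{C}:\operatorname{Re}(z)\ge 0\}$ and meets the imaginary axis only at the origin. Hence every eigenvalue has nonnegative real part, and any eigenvalue on the imaginary axis must equal $0$; in particular all nonzero eigenvalues have positive real parts.

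The substantive work is the simplicity characterization, which I would prove in both directions. For the ``if'' part, assuming $\mathcal{G}$ contains a spanning tree with root $r$, I would show that deleting the row and column of $r$ yields a nonsingular matrix, whence $\operatorname{rank}(\mathcal{L})=N-1$. The reduced matrix is weakly diagonally dominant, and because every remaining node is joined to $r$ by a directed path, each of its rows can be chained to a strictly dominant row (the rows of the out-neighbors of $r$, which lose a negative entry when column $r$ is removed); such a weakly chained diagonally dominant matrix is nonsingular, giving geometric multiplicity one. To upgrade to algebraic multiplicity one I would rule out a nontrivial Jordan block by exhibiting a nonnegative left null vector $\xi$ of $\mathcal{L}$, supported on the root component, with $\xi^{T}\mathbf{1}>0$; since $\operatorname{range}(\mathcal{L})$ is the orthogonal complement of the left null space, $\mathbf{1}\notin\operatorname{range}(\mathcal{L})$, so no generalized eigenvector solving $\mathcal{L}w=\mathbf{1}$ exists and $0$ is simple. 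For the ``only if'' part I would argue the contrapositive: if $\mathcal{G}$ has no spanning tree, then its condensation (the directed acyclic graph of strongly connected components) contains at least two distinct components that cannot be reached from outside themselves, and each such component contributes an independent vector to the kernel of $\mathcal{L}$, forcing the geometric multiplicity of $0$ to be at least two, so that $0$ is not simple.

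I expect the genuine obstacle to be the simplicity step, and within it the passage from geometric to algebraic multiplicity: the rank and Gershgorin arguments only bound the dimension of the eigenspace, whereas ``simple'' refers to the algebraic multiplicity, so the absence of a Jordan chain at $0$ must be argued separately, either through the left null vector above or, equivalently, by showing via the matrix-tree theorem that the sum of the order-$(N-1)$ principal minors of $\mathcal{L}$, which counts the rooted spanning trees, is nonzero precisely when a spanning tree exists. Maintaining a consistent edge-direction convention throughout the reachability arguments will require some care, but is routine.
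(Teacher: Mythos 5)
This lemma is not proved in the paper at all; it is imported verbatim from \cite{ren2005consensus}, so there is no internal argument to compare yours against, and your proposal must stand on its own. It does: the row-sum computation, the Gershgorin localization, and your decomposition of the simplicity claim --- nonsingularity of the grounded (row/column-deleted) Laplacian via weakly chained diagonal dominance, exclusion of a Jordan block at zero via a left null vector $\xi\geq 0$ with $\xi^T\mathbf{1}>0$, and the converse via the existence of at least two components of the condensation that cannot be reached from outside --- are all correct. You also rightly isolate the geometric-versus-algebraic multiplicity issue as the real content; many informal treatments stop at the rank argument and never address it.

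Two steps deserve to be fleshed out rather than asserted. First, the existence of the nonnegative left null vector supported on the root strongly connected component is itself a Perron--Frobenius fact: one should observe that the root component has no incoming edges (any node with an edge into it would both reach it and be reached from it, hence lie inside it), so that after reordering $\mathcal{L}$ is block triangular with leading block equal to the Laplacian of a strongly connected graph; Lemma~\ref{laplace} of the paper then supplies a positive left null vector of that block, which you extend by zeros to get $\xi$. Second, in the converse direction the kernel vectors contributed by two source components are not indicator vectors: their entries on the remaining nodes must be obtained by inverting the nonsingular grounded block, or, more simply, one can again work with left null vectors, which have disjoint supports and are therefore automatically independent, and conclude via $\mathrm{rank}(\mathcal{L})=\mathrm{rank}(\mathcal{L}^T)$. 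Alternatively, the matrix-tree route you mention settles the algebraic multiplicity in both directions at once and sidesteps both issues; it is arguably the cleanest way to make the sketch into a complete proof.
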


\begin{lemma}[\cite{mei2014consensus}]\label{laplace}
Suppose that $\mathcal{G}$ is strongly connected. There exists a vector $r=[r_1,\cdots,r_N]^T$ with $r_i>0,i=1,\cdots,N$, such that $r^T\mathcal{L}=0$. Let $R=\mathrm{diag}(r_1,\cdots,r_N)$ \cite{yu2010second}. Then, $\hat{\mathcal{L}}\triangleq R\mathcal{L}+\mathcal{L}^TR$ is the symmetric Laplacian matrix associated with an undirected connected graph. Moreover, $\min_{\xi^Tx=0,x\neq0}\frac{x^T\hat{\mathcal{L}}x}{x^Tx}>\frac{\lambda_2(\hat{\mathcal{L}})}{N}$, where $\lambda_2(\hat{\mathcal{L}})$ denotes the smallest nonzero eigenvalue of $\hat{\mathcal{L}}$ and $\xi$ is any vector with positive entries.
\end{lemma}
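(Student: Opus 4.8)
The plan is to establish the three assertions in turn: the existence of a positive left null vector $r$, the Laplacian structure of $\hat{\mathcal{L}}$, and finally the Rayleigh-quotient bound, which I expect to be the crux. First I would produce $r$ by Perron--Frobenius theory. Since $\mathcal{G}$ is strongly connected, $\mathcal{L}$ and hence $\mathcal{L}^T$ are irreducible, so $cI-\mathcal{L}^T$ is an irreducible nonnegative matrix for $c$ large enough. By Lemma \ref{lem1}, zero is a simple eigenvalue of $\mathcal{L}$ with right eigenvector $\mathbf{1}$ and all other eigenvalues have positive real part; equivalently, $c$ is the simple Perron eigenvalue of $cI-\mathcal{L}^T$, whose associated eigenvector can be chosen strictly positive. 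This eigenvector is the desired $r>0$ with $\mathcal{L}^T r=0$, i.e. $r^T\mathcal{L}=0$, and $R=\mathrm{diag}(r_1,\dots,r_N)$ is then well defined.

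Next I would verify that $\hat{\mathcal{L}}=R\mathcal{L}+\mathcal{L}^TR$ is a symmetric Laplacian of a connected undirected graph. Symmetry is immediate. For the zero row-sum property, $\hat{\mathcal{L}}\mathbf{1}=R\mathcal{L}\mathbf{1}+\mathcal{L}^TR\mathbf{1}=R\cdot 0+\mathcal{L}^T r=0$, using $\mathcal{L}\mathbf{1}=0$ and $r^T\mathcal{L}=0$. For an off-diagonal entry with $i\neq j$, the $(i,j)$ entry equals $r_il_{ij}+r_jl_{ji}=-(r_ia_{ij}+r_ja_{ji})\le 0$, since $r_i,r_j>0$ and $a_{ij},a_{ji}\ge 0$. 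Thus $\hat{\mathcal{L}}$ is a symmetric matrix with nonpositive off-diagonals and zero row sums, hence the Laplacian of an undirected weighted graph whose edge set is the symmetrization of $\mathcal{E}$. As $\mathcal{G}$ is strongly connected, this symmetrized graph is connected, so $\hat{\mathcal{L}}$ is positive semidefinite with a simple zero eigenvalue, null space $\mathrm{span}\{\mathbf{1}\}$, and $\lambda_2(\hat{\mathcal{L}})>0$.

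For the final inequality I would work with the Rayleigh quotient directly. Writing $\bar x=\frac{1}{N}\mathbf{1}^Tx$ and using $\hat{\mathcal{L}}\mathbf{1}=0$ together with the minimum-eigenvalue characterization on $\mathrm{span}\{\mathbf{1}\}^\perp$, one gets $x^T\hat{\mathcal{L}}x=(x-\bar x\mathbf{1})^T\hat{\mathcal{L}}(x-\bar x\mathbf{1})\ge \lambda_2(\hat{\mathcal{L}})\big(\|x\|^2-\tfrac{(\mathbf{1}^Tx)^2}{N}\big)$. Dividing by $\|x\|^2$ reduces the claim to showing $(\mathbf{1}^Tx)^2<(N-1)\|x\|^2$ for every nonzero $x$ with $\xi^Tx=0$. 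The key computation is to maximize $\frac{(\mathbf{1}^Tx)^2}{\|x\|^2}$ over the hyperplane $\xi^\perp$: projecting $\mathbf{1}$ onto $\xi^\perp$ gives $\max_{\xi^Tx=0}\frac{(\mathbf{1}^Tx)^2}{\|x\|^2}=\|\mathbf{1}\|^2-\frac{(\mathbf{1}^T\xi)^2}{\|\xi\|^2}=N-\frac{(\sum_i\xi_i)^2}{\sum_i\xi_i^2}$. Since every $\xi_i>0$, the elementary inequality $(\sum_i\xi_i)^2=\sum_i\xi_i^2+2\sum_{i<j}\xi_i\xi_j>\sum_i\xi_i^2$ holds, so this maximum is strictly less than $N-1$, which yields $\frac{x^T\hat{\mathcal{L}}x}{\|x\|^2}>\frac{\lambda_2(\hat{\mathcal{L}})}{N}$ uniformly over $\xi^\perp$.

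The main obstacle is precisely this last step: the bound is not a direct consequence of the standard Fiedler characterization, because $x$ is constrained to be orthogonal to a general positive vector $\xi$ rather than to the null vector $\mathbf{1}$. The decisive observation is that the gap between the naive Cauchy--Schwarz estimate $(\mathbf{1}^Tx)^2\le N\|x\|^2$ and the required $(N-1)\|x\|^2$ is accounted for exactly by the projection term $\frac{(\sum_i\xi_i)^2}{\sum_i\xi_i^2}>1$, which is strictly positive only because $\xi$ has strictly positive entries; this is where the factor $1/N$ and the positivity hypothesis on $\xi$ enter.
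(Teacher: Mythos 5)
Your proof is correct. Note, however, that the paper itself offers no proof of this lemma to compare against: it is imported verbatim from the cited reference \cite{mei2014consensus}, so your argument stands as a self-contained justification of a result the authors simply quote. All three parts check out: the Perron--Frobenius construction of $r>0$, the verification that $\hat{\mathcal{L}}$ has zero row sums, nonpositive off-diagonal entries, and the connectivity of the symmetrized graph, and--the real content--the Rayleigh-quotient bound. Your handling of the last part is the right one, and it is worth emphasizing why: the reduction to $(\mathbf{1}^Tx)^2<(N-1)\|x\|^2$ on $\xi^\perp$ cannot be closed by a naive Cauchy--Schwarz estimate. If one writes $x=c\mathbf{1}+y$ with $y\perp\mathbf{1}$ and bounds $(\xi^Ty)^2\le\|\xi\|^2\|y\|^2$, the resulting inequality only needs $N\|\xi\|^2<(N-1)(\mathbf{1}^T\xi)^2$, which fails for positive vectors like $\xi=(1,\epsilon,\dots,\epsilon)$ with $\epsilon$ small; the argument would then only deliver the weaker constant $\lambda_2(\hat{\mathcal{L}})/(N+1)$ or break down entirely. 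Your exact computation of $\max_{\xi^Tx=0}(\mathbf{1}^Tx)^2/\|x\|^2$ as the squared norm of the projection of $\mathbf{1}$ onto $\xi^\perp$, namely $N-(\mathbf{1}^T\xi)^2/\|\xi\|^2<N-1$, is precisely the sharp estimate needed, and it also gives the strict, uniform (in $x$) margin required for the stated minimum to exceed $\lambda_2(\hat{\mathcal{L}})/N$ rather than merely equal it in the limit. The only cosmetic caveat is that the strict inequality $(\mathbf{1}^T\xi)^2>\|\xi\|^2$ needs $N\ge 2$, which is the only nontrivial case anyway.
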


\begin{lemma}[\cite{zhang2012lyapunov,li2014TAC}]\label{Mmatrix}
Consider a nonsingular M-matrix $L$. There exists a diagonal matrix $G\equiv \text{diag}(g_1,\cdots,g_N)>0$ such that $GL+L^TG>0$.
\end{lemma}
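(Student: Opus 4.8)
The plan is to construct the required diagonal matrix $G$ explicitly from M-matrix theory and then certify positive definiteness through the sign pattern of $L$. First I would invoke the standard property that a nonsingular M-matrix has a nonnegative inverse, $L^{-1}\geq 0$ (entrywise); since $L^{-1}$ is nonsingular it can have no zero row or column, so $\xi\triangleq L^{-1}\mathbf{1}$ and $\eta\triangleq (L^{-1})^{T}\mathbf{1}$ are both \emph{strictly} positive vectors, satisfying $L\xi=\mathbf{1}$ and $L^{T}\eta=\mathbf{1}$. These two positive vectors are the only structural input I need from the M-matrix hypothesis.

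The key step is to couple this left/right information by taking $G\triangleq\mathrm{diag}(\eta_1/\xi_1,\cdots,\eta_N/\xi_N)>0$, which is engineered precisely so that $G\xi=\eta$. Writing $M\triangleq GL+L^{T}G$, which is symmetric, this choice makes the single positive vector $\xi$ a witness of positivity, since $M\xi=G(L\xi)+L^{T}(G\xi)=G\mathbf{1}+L^{T}\eta=g+\mathbf{1}$, where $g_i=\eta_i/\xi_i>0$. Hence $M\xi>0$ holds strictly, obtained essentially for free from the identities $L\xi=\mathbf{1}$ and $L^{T}\eta=\mathbf{1}$.

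To finish, I would combine this positivity with the sign structure of $L$. Because $L_{ij}\leq 0$ for $i\neq j$, every off-diagonal entry $M_{ij}=(\eta_i/\xi_i)L_{ij}+(\eta_j/\xi_j)L_{ji}$ is nonpositive, so $M$ is a symmetric Z-matrix. Conjugating by $D_\xi\triangleq\mathrm{diag}(\xi_1,\cdots,\xi_N)$ gives $\tilde M\triangleq D_\xi M D_\xi$, again symmetric with nonpositive off-diagonal entries and with $i$-th row sum equal to $\xi_i(M\xi)_i>0$; such a matrix is strictly diagonally dominant with positive diagonal, hence positive definite by Gershgorin's theorem. Since $M=D_\xi^{-1}\tilde M D_\xi^{-1}$ is congruent to $\tilde M$, I conclude $GL+L^{T}G=M>0$.

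I expect the only genuinely non-routine point to be the simultaneous scaling $G=\mathrm{diag}(\eta_i/\xi_i)$: it is exactly what forces one explicit positive vector to certify positivity of both $GL$ and $L^{T}G$ at once. A naive choice such as $\mathrm{diag}(\xi)$ or $\mathrm{diag}(\eta)$ controls only one of the two terms and leaves the row sums $L\mathbf{1}$ — which may be negative for a general nonsingular M-matrix — uncontrolled, so diagonal dominance would fail. Everything else (nonnegativity of $L^{-1}$, the Z-matrix sign pattern, and the Gershgorin conclusion) is standard; alternatively the last step can be stated as the known equivalence that a symmetric Z-matrix admitting some $\xi>0$ with $M\xi>0$ is a symmetric nonsingular M-matrix, and hence positive definite.
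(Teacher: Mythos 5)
Your proof is correct. Note first that the paper itself offers no proof of this lemma: it is quoted (Lemma~\ref{Mmatrix}) directly from the cited references \cite{zhang2012lyapunov,li2014TAC}, so there is no in-paper argument to compare against; the comparison can only be made with the standard proof in that literature, and your argument is essentially that classical diagonal-stability proof, carried out in full. Each step checks out: $L^{-1}\geq 0$ is the standard inverse-positivity characterization of nonsingular M-matrices, and nonsingularity indeed forces $\xi=L^{-1}\mathbf{1}>0$ and $\eta=(L^{-T})\mathbf{1}>0$ strictly; the choice $G=\mathrm{diag}(\eta_i/\xi_i)$ gives $M\xi = G\mathbf{1}+L^{T}\eta = g+\mathbf{1}>0$ exactly as you compute; the Z-sign pattern of $M$ is immediate; and the congruence $\tilde M = D_\xi M D_\xi$ converts the certificate $M\xi>0$ into strict diagonal dominance with positive diagonal, whence positive definiteness by Gershgorin plus symmetry, and positive definiteness of $M$ itself by Sylvester's law of inertia. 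Your closing remark is also the right diagnosis of where the content lies: the simultaneous scaling $\eta_i/\xi_i$ is what makes a single positive vector certify both terms, and the naive choices $\mathrm{diag}(\xi)$ or $\mathrm{diag}(\eta)$ indeed fail because $L\mathbf{1}$ need not be nonnegative. Two cosmetic points: the paper's definition of a nonsingular M-matrix states $a_{ij}<0$ for $i\neq j$, which is surely a typo for $a_{ij}\leq 0$ (the matrix $\widehat{\mathcal{L}}_1$ to which the lemma is applied has off-diagonal entries $-a_{ij}\in\{0,-1\}$); your proof, written for $L_{ij}\leq 0$, covers the case actually needed. And if you prefer not to invoke Gershgorin, your last step can be compressed, as you note, into the standard equivalence that a Z-matrix admitting $\xi>0$ with $M\xi>0$ is a nonsingular M-matrix, which for symmetric $M$ means all eigenvalues are real and positive.
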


\begin{lemma}[\cite{bernstein2009matrix}]\label{ineq}
If $a$ and $b$ are nonnegative real numbers and $p$ and $q$ are positive real numbers such that $\frac{1}{p} + \frac{1}{q} = 1$, then
$ab\leq \frac{a^p}{p} + \frac{b^q}{q}$, where the equality holds if and only if $a^p = b^q$.
\end{lemma}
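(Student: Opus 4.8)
The plan is to recognize the statement as the classical \emph{Young's inequality} and to prove it by exploiting the concavity of the logarithm, which turns the product $ab$ into a sum and thereby matches the structure of the right-hand side. First I would dispose of the degenerate cases: if $a=0$ or $b=0$, the left-hand side vanishes while $\frac{a^p}{p}+\frac{b^q}{q}$ is nonnegative, so the inequality holds trivially. Hence I may assume $a>0$ and $b>0$ in what follows.

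Before the main step, I would record a structural consequence of the hypothesis that is easy to overlook: the constraint $\frac1p+\frac1q=1$ with $p,q>0$ forces both $p>1$ and $q>1$, since each of $\frac1p,\frac1q$ must lie strictly between $0$ and $1$. This guarantees that $a^p$ and $b^q$ are genuine convex powers and, crucially, that $\frac1p$ and $\frac1q$ are legitimate weights of a convex combination. The core of the argument is then a single application of Jensen's inequality to the strictly concave function $\ln$, applied to the positive numbers $a^p$ and $b^q$ with these weights:
\[
\ln\Bigl(\tfrac1p\,a^p+\tfrac1q\,b^q\Bigr)\ \ge\ \tfrac1p\ln(a^p)+\tfrac1q\ln(b^q)=\ln a+\ln b=\ln(ab).
\]
Since $\ln$ is strictly increasing, this yields $\frac{a^p}{p}+\frac{b^q}{q}\ge ab$, which is the desired bound.

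The remaining task, and the only point I expect to require genuine care, is the equality characterization, where one must invoke \emph{strict} concavity of $\ln$ rather than mere concavity. Jensen's inequality is an equality precisely when the two evaluation points coincide, i.e. $a^p=b^q$; for the converse I would substitute directly, noting that when $a^p=b^q$ both sides reduce to $a^p\bigl(\frac1p+\frac1q\bigr)=a^p=ab$, so equality indeed holds. As a routine alternative that avoids Jensen entirely, one may fix $b$ and minimize $f(a)=\frac{a^p}{p}+\frac{b^q}{q}-ab$ over $a\ge0$: the derivative $f'(a)=a^{p-1}-b$ has the unique root $a^{p-1}=b$, which one checks satisfies $a^p=b^q$, and since $f''(a)=(p-1)a^{p-2}>0$ this critical point is the global minimum, at which $f=0$. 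Either route establishes both the inequality and its equality condition without serious obstacle.
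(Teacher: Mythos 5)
Your proof is correct, but there is nothing in the paper to compare it against: the paper states this lemma (it is Young's inequality) with a citation to Bernstein's \emph{Matrix Mathematics} and uses it as a known fact, giving no proof of its own. Your argument is the standard self-contained one and is sound in all its steps: the degenerate cases $a=0$ or $b=0$ are disposed of directly; the observation that $\frac{1}{p}+\frac{1}{q}=1$ with $p,q>0$ forces $p>1$ and $q>1$ is correct and is exactly what legitimizes $\frac{1}{p},\frac{1}{q}$ as convex weights; and the Jensen step
\[
\ln\Bigl(\tfrac{1}{p}a^p+\tfrac{1}{q}b^q\Bigr)\ \ge\ \tfrac{1}{p}\ln(a^p)+\tfrac{1}{q}\ln(b^q)=\ln(ab),
\]
combined with monotonicity of $\ln$, yields the inequality. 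The equality analysis is also right: strict concavity of $\ln$ makes the Jensen step an equality precisely when $a^p=b^q$, and your converse check is valid (from $a^p=b^q$ one has $b=a^{p/q}$, so $ab=a^{1+p/q}=a^p$, which matches $a^p\bigl(\frac{1}{p}+\frac{1}{q}\bigr)=a^p$). The alternative calculus route, minimizing $f(a)=\frac{a^p}{p}+\frac{b^q}{q}-ab$ for fixed $b>0$, is likewise a complete and standard argument; for full rigor one should add that $f''>0$ on $(0,\infty)$ makes $f$ convex there and that $f(0)>0$, so the interior critical point is the unique global minimizer on $[0,\infty)$. The only cosmetic omission is the equality characterization in the degenerate cases: if $a=0$ or $b=0$, equality forces both to vanish, giving $a^p=0=b^q$, so the ``if and only if'' statement survives there trivially.
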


\begin{lemma}[\cite{slotine1991applied}]\label{comparison}
If a real function $W(t)$ satisfies the inequality $W(t) \leq -aW(t) +b $,
where $a$ and $b$ are positive constant numbers. Then,
$$W(t)\leq [W(0)-\frac{a}{b}] e^{-at} +\frac{a}{b}.$$
\end{lemma}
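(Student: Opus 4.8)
The plan is to establish this bound via the standard integrating-factor argument for first-order linear differential inequalities, which is just the scalar comparison principle. I read the hypothesis as the differential inequality $\dot{W}(t) \leq -aW(t) + b$ (the $\dot{}$ on the left-hand side appears to be a typographical slip), and I expect the correct conclusion to carry the equilibrium term $\frac{b}{a}$ of the associated linear equation $\dot{v} = -av + b$, so the bound should read $W(t) \leq \left[W(0) - \frac{b}{a}\right]e^{-at} + \frac{b}{a}$.

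First I would rewrite the hypothesis in the form $\dot{W}(t) + aW(t) \leq b$. Since $a > 0$ gives $e^{at} > 0$ for all $t$, multiplying both sides by the integrating factor $e^{at}$ preserves the inequality and yields $e^{at}\dot{W}(t) + a e^{at}W(t) \leq b e^{at}$. The left-hand side is exactly $\frac{d}{dt}\!\left(e^{at}W(t)\right)$, so the inequality collapses to $\frac{d}{dt}\!\left(e^{at}W(t)\right) \leq b e^{at}$.

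Next I would integrate both sides over $[0,t]$. The left side telescopes to $e^{at}W(t) - W(0)$, while the right side integrates to $\frac{b}{a}\!\left(e^{at} - 1\right)$. Rearranging and multiplying through by $e^{-at}$ produces $W(t) \leq W(0)e^{-at} + \frac{b}{a}\!\left(1 - e^{-at}\right)$, which regroups as the claimed bound $W(t) \leq \left[W(0) - \frac{b}{a}\right]e^{-at} + \frac{b}{a}$.

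There is essentially no obstacle here; the only point demanding care is the sign of the integrating factor, and because $a > 0$ forces $e^{at} > 0$, the direction of the inequality is preserved throughout, so the termwise integration of a functional inequality is legitimate. As an equivalent route, I could invoke the comparison principle directly: comparing $W$ to the solution of the initial value problem $\dot{v} = -av + b$, $v(0) = W(0)$, whose vector field $-av + b$ is Lipschitz in $v$, the comparison theorem guarantees $W(t) \leq v(t)$, and solving that linear equation gives precisely the right-hand side above.
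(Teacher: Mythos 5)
Your proof is correct, and it is essentially the standard argument: the paper itself gives no proof of this lemma, citing it directly from the reference \cite{slotine1991applied}, where the same integrating-factor/comparison argument you describe is used. You also correctly diagnosed the two typographical errors in the statement as printed: the hypothesis should read $\dot{W}(t) \leq -aW(t) + b$, and the conclusion should be $W(t) \leq \bigl[W(0)-\frac{b}{a}\bigr]e^{-at} + \frac{b}{a}$, with equilibrium value $\frac{b}{a}$ rather than $\frac{a}{b}$; this is consistent with how the lemma is actually invoked in the proof of Theorem \ref{thmlf2}, where $V_4$ is shown to converge to a residual set whose radius is the constant term divided by $\delta$ (i.e., $b/a$, not $a/b$).
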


\section{Distributed Output Feedback Adaptive Protocols
for Strongly Connected Graphs}\label{s3}

Consider a group of $N$ identical linear dynamical systems, with the dynamics of the $i$-th agent described by
\begin{equation}\label{model1}
\begin{aligned}
&\dot{x}_{i}=Ax_{i}+Bu_{i},\\
&y_i=Cx_i,
\quad  i=1,\cdots,N,
\end{aligned}
\end{equation}
where $x_i\in\mathbf{R}^n$ is the state vector, $y_i\in\mathbf{R}^m$ is the measured output vector,
$u_i\in\mathbf{R}^{p}$ is the control input vector of the $i$-th agent, respectively, and
$A$, $B$ and $C$ are constant known matrices with compatible dimensions.

The communication graph among the $N$ agents are represented by a directed graph $\mathcal{G}$.
We assume that $\mathcal{G}$ satisfies

\begin{assumption}\label{assp1}
The communication graph $\mathcal{G}$ is strongly connected.
\end{assumption}

With the agent dynamics given in \dref{model1}, our purpose in this section is to design fully distributed output feedback consensus protocols
to solve the consensus problem, wherein by consensus, we mean that
$\lim_{t\rightarrow \infty}\|x_i(t)- x_j(t)\|=0$,
$\forall\,i,j=1,\cdots,N.$

%
%
%
%
%

We assume that the agents have access to their own outputs, i.e.,
the agents are introspective as termed in \cite{yang2011output,peymani2014h}.
The distributed adaptive output feedback protocol is proposed for each agent as follows:
\begin{equation}\label{controller1}
\begin{aligned}
\dot{v}_i&=Av_i+Bu_i+F(Cv_i-y_i),\\
\dot{w}_i&=Aw_i+Bu_i+(d_i+\rho_i)FC(\psi_i-\eta_i)+F(Cv_i-y_i),\\
u_i&=Kw_i,\\
\dot{d}_i&=(\psi_i-\eta_i)^TC^TC(\psi_i-\eta_i),\quad i=1,\cdots,N,
\end{aligned}
\end{equation}
where $\psi_i\triangleq\sum_{j=1}^{N}a_{ij}(w_{i}-w_{j})$, $\eta_i\triangleq\sum_{j=1}^{N}a_{ij}(v_{i}-v_{j})$,
$v_i\in\mathbf{R}^n$ and $w_i\in\mathbf{R}^n$ are the internal states of the protocol,
$d_i$ denotes the time-varying coupling weight associated with the $i$-th agent with $d_i(0)>0$, $K$ and $F$
are the feedback gain matrices, and $\rho_i(\cdot)$
are smooth and monotonically increasing functions
which satisfy the condition
$\rho_i(s)\geq0$ for $s>0$.

Note that the term $C(\psi_i-\eta_i)$ in \dref{controller1} implies that
the agents need to transmit the virtual outputs $Cw_i$ and $Cv_i$ of
the internal states $w_i$ and $v_i$ of their corresponding protocols to their neighbors
via the communication network $\mathcal {G}$.
In \dref{controller1}, the parameters $K$, $F$, and $\rho_i(\cdot)$
need to be determined.

Let $\xi_i\triangleq\sum_{j=1}^{N}a_{ij}(x_{i}-x_{j}),i=1,\cdots,N$, $\xi\triangleq[\xi_{1}^{T},\cdots,\xi_{N}^{T}]^{T}$, $\eta\triangleq[\eta_{1}^{T},\cdots,\eta_{N}^{T}]^{T}$, $\psi\triangleq[\psi_{1}^{T},\cdots,\psi_{N}^{T}]^{T}$, $x\triangleq[x_{1}^{T},\cdots,x_{N}^{T}]^{T}$, $v\triangleq[v_{1}^{T},\cdots,v_{N}^{T}]^{T}$, and
$w\triangleq[w_{1}^{T},\cdots,w_{N}^{T}]^{T}$.
Then, we have
\begin{equation}\label{error1}
\begin{aligned}
\xi&=(\mathcal{L}\otimes I_n)x,\\
\eta&=(\mathcal{L}\otimes I_n)v,\\
\psi&=(\mathcal{L}\otimes I_n)w.
\end{aligned}
\end{equation}
Under Assumption \ref{assp1}, it is known by Lemma \ref{lem1} that $\mathcal {L}$ has an eigenvalue at the origin with $\mathbf{1}$ being the corresponding eigenvector and all the nonzero eigenvalues of $\mathcal {L}$ have positive real parts.
By the first equality in \dref{error1}, it is easy to see that
the consensus problem is solved if and only if $\xi$ asymptotically
converges to zero. We will refer to $\xi$ as the consensus error hereafter.

By substituting \dref{controller1} into \dref{model1}, we can write the closed-loop dynamics of the
network in a compact form as
\begin{equation}\label{derror1}
\begin{aligned}
\dot{\xi}&=(I_N\otimes A)\xi+(I_N\otimes BK)\psi,\\
\dot{\eta}&=(I_N\otimes A)\eta+(I_N\otimes BK)\psi+(I_N\otimes FC)(\eta-\xi),\\
\dot{\psi}&=[I_N\otimes (A+ BK)]\psi+[\mathcal{L}(D+\rho)\otimes FC](\psi-\eta)\\
&\quad+(I_N\otimes FC)(\eta-\xi),\\
\dot{d}_i&=(\psi_i-\eta_i)^TC^TC(\psi_i-\eta_i),\quad i=1,\cdots,N,
\end{aligned}
\end{equation}
where $D\triangleq \mathrm{diag}(d_1,\cdots,d_N)$ and $\rho\triangleq \mathrm{diag}(\rho_1,\cdots,\rho_N)$.
Let $\zeta\triangleq \eta-\xi$ and $\varrho\triangleq[\varrho_1^T,\cdots,\varrho_N^T]^T=\psi-\eta$,
where $\varrho_i=\psi_i-\eta_i$, $i=1,\cdots,N$.
Then, it follows from the first two equalities in \dref{derror1} that
\begin{equation}\label{derror11}
\begin{aligned}
\dot{\zeta}&=[I_N\otimes (A+FC)]\zeta.
\end{aligned}
\end{equation}
From the last three equalities in \dref{derror1}, we find that
\begin{equation}\label{derror12}
\begin{aligned}
\dot{\varrho}&=[I_N\otimes A+\mathcal{L}(D+\rho)\otimes FC]\varrho,\\
\dot{d}_i&=\varrho_i^TC^TC\varrho_i,\quad i=1,\cdots,N.
\end{aligned}
\end{equation}
The third equality in \dref{derror1} can be rewritten as
\begin{equation}\label{derror13}
\begin{aligned}
\dot{\psi}&=[I_N\otimes (A+ BK)]\psi+[\mathcal{L}(D+\rho)\otimes FC]\varrho\\
&\quad+(I_N\otimes FC)\zeta.
\end{aligned}
\end{equation}

The following theorem designs the adaptive output feedback protocol \dref{controller1}.

\begin{theorem}\label{leaderless1}
Suppose that the communication graph $\mathcal{G}$ satisfies Assumption 1.
Then, the leaderless consensus problem of the agents in \dref{model1} can be solved under the adaptive output feedback protocol \dref{controller1}, if $A+BK$ is Hurwitz, $F=-S^{-1}C^T$, and $\rho_i=\varrho_i^TS\varrho_i$, where $S>0$ is a
solution to the linear matrix inequality (LMI)
\begin{equation}\label{lmic}
A^TS+SA-2C^TC<0.
\end{equation}
Moreover, each coupling weight $d_{i}$ converges to some finite steady-state value.
\end{theorem}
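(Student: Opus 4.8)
The plan is to exploit the cascade structure already exhibited by the decomposed error system \dref{derror11}--\dref{derror13}. From the definitions $\zeta=\eta-\xi$ and $\varrho=\psi-\eta$ one has the algebraic identity $\xi=\psi-\zeta-\varrho$, so it suffices to show that each of $\zeta$, $\varrho$ and $\psi$ tends to zero, together with boundedness (and hence convergence) of the monotone gains $d_i$. I would handle the three blocks in the order $\zeta$, then $(\varrho,d_i)$, then $\psi$, since \dref{derror11} is autonomous, \dref{derror12} is driven only by itself, and \dref{derror13} is forced by $\zeta$ and $\varrho$.

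The $\zeta$-block is immediate. Substituting $F=-S^{-1}C^T$ gives $(A+FC)^TS+S(A+FC)=A^TS+SA-2C^TC$, which is negative definite by the LMI \dref{lmic}; hence $A+FC$ is Hurwitz and, via $V_\zeta=\zeta^T(I_N\otimes S)\zeta$, \dref{derror11} yields $\zeta\to0$ exponentially. A useful preliminary observation for the main block is that $\varrho$ lives on a fixed constraint set: since $\varrho=(\mathcal{L}\otimes I_n)(w-v)$ by construction and $r^T\mathcal{L}=0$ for the positive left-null vector $r$ of Lemma \ref{laplace}, we have $\sum_{i=1}^N r_i\varrho_i=(r^T\mathcal{L}\otimes I_n)(w-v)\equiv0$. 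This places $\varrho$ exactly on the subspace where Lemma \ref{laplace} guarantees coercivity of $\hat{\mathcal{L}}$.

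The heart of the proof is the $(\varrho,d_i)$-block. I would build a Lyapunov function of the form $V_\varrho=\varrho^T(R\otimes S)\varrho+\kappa\sum_{i=1}^N r_i(d_i-\alpha)^2$ with $R=\mathrm{diag}(r_i)$ and scalars $\kappa>0$, $\alpha>0$ to be fixed. Differentiating the quadratic part, using $SFC=-C^TC$ and the LMI \dref{lmic}, bounds the $A$-contribution by $2\sum_i r_i\varrho_i^TC^TC\varrho_i$, while the coupling produces the indefinite term $-\varrho^T\bigl((R\mathcal{L}(D+\rho)+(D+\rho)\mathcal{L}^TR)\otimes C^TC\bigr)\varrho$; the penalty contributes $2\kappa\sum_i r_i(d_i-\alpha)\varrho_i^TC^TC\varrho_i$. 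The main obstacle --- and the reason a naive combination of existing state-feedback and observer-type methods fails --- is precisely this cross term: because $\mathcal{L}$ is asymmetric and the gains $d_i+\rho_i$ are non-uniform across agents, $R\mathcal{L}(D+\rho)+(D+\rho)\mathcal{L}^TR$ is not sign-definite. This is where the monotone functions $\rho_i=\varrho_i^TS\varrho_i$ earn their keep: they inflate the effective couplings so that, together with Young's inequality (Lemma \ref{ineq}) and the coercivity bound of Lemma \ref{laplace} on the subspace $\sum_i r_i\varrho_i=0$, the helpful uniform part of the coupling dominates both the residual indefinite part and the positive $d_i$-penalty contribution once $\alpha$ is taken large enough. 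I expect this domination estimate to be the most delicate step.

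Granting $\dot V_\varrho\le0$, the remaining conclusions are routine. Boundedness of $V_\varrho$ makes $\varrho$ and every $d_i$ bounded; since $\dot d_i=\varrho_i^TC^TC\varrho_i\ge0$, each $d_i$ is monotone and bounded, hence converges to a finite limit, giving the last assertion of the theorem. All signals being bounded, $\dot\varrho$ is bounded, and integrating $\dot V_\varrho$ shows $\varrho\in L_2$; Barbalat's lemma then gives $\varrho\to0$. Finally, in \dref{derror13} the matrix $A+BK$ is Hurwitz by hypothesis, and its forcing $[\mathcal{L}(D+\rho)\otimes FC]\varrho+(I_N\otimes FC)\zeta$ tends to zero, because $D+\rho$ converges (the $d_i$ converge and $\rho_i=\varrho_i^TS\varrho_i\to0$) while $\varrho,\zeta\to0$; hence $\psi\to0$ (for instance, with any $P>0$ satisfying $(A+BK)^TP+P(A+BK)<0$, Lemma \ref{comparison} applied to $\psi^T(I_N\otimes P)\psi$ quantifies the decay). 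Combining with the identity $\xi=\psi-\zeta-\varrho$ yields $\xi\to0$, i.e.\ consensus is achieved.
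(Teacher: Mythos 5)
Your overall architecture is the same as the paper's: the cascade $\zeta\to(\varrho,d_i)\to\psi$, the identity $\xi=\psi-\zeta-\varrho$, the Hurwitz argument for $A+FC$, the constraint $(r^T\otimes I_n)\varrho\equiv 0$ enabling Lemma \ref{laplace}, and the endgame (monotone bounded $d_i$ converge; $\varrho\to 0$; $\psi\to 0$ since $A+BK$ is Hurwitz and its forcing vanishes). However, there is a genuine gap at exactly the step you flag as ``the most delicate'': your Lyapunov function $V_\varrho=\varrho^T(R\otimes S)\varrho+\kappa\sum_{i=1}^N r_i(d_i-\alpha)^2$ has a \emph{constant-weight} quadratic part, and with it the domination estimate you hope for cannot be carried out. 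As you compute, its derivative produces the coupling term $-\varrho^T[(R\mathcal{L}(D+\rho)+(D+\rho)\mathcal{L}^TR)\otimes C^TC]\varrho$. Splitting $D+\rho=\alpha I_N+(D-\alpha I_N+\rho)$ and applying Lemma \ref{ineq} to the asymmetric remainder leaves a positive term of the form $\tfrac{1}{\epsilon}\|R\mathcal{L}\|^2\,\varrho^T[(D-\alpha I_N+\rho)^2\otimes C^TC]\varrho$, which contains contributions like $\rho_i^2\|C\varrho_i\|^2=(\varrho_i^TS\varrho_i)^2\|C\varrho_i\|^2$ (degree six in $\varrho_i$) and $(d_i-\alpha)^2\|C\varrho_i\|^2$. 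The negative terms at your disposal are all of lower order: the coercive part of the coupling gives only $-c\,\alpha\|C\varrho_i\|^2$ with $\alpha$ a \emph{constant}, and the adaptation penalty $2\kappa r_i(d_i-\alpha)\|C\varrho_i\|^2$ is itself positive once $d_i>\alpha$. Since neither $d_i$ nor $\varrho$ is known to be bounded at this stage (their boundedness is what the Lyapunov argument must deliver), no constant choice of $\alpha$, $\kappa$, $\epsilon$ closes the estimate. This is not a removable technicality: for non-uniform gains $d_i+\rho_i$ over a directed graph, the constant-weight construction is precisely the one that fails, which is why the weighted constructions of \cite{li2014TAC,lv2014Auto} were introduced.

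The paper's resolution is to put the gains and the monotone functions \emph{inside} the weights of the quadratic form, taking $V_1=\frac12\sum_i r_i(2d_i+\rho_i)\varrho_i^TS\varrho_i+\frac12\sum_i r_i\tilde d_i^2$ as in \dref{lya1}. Because $\rho_i=\varrho_i^TS\varrho_i$, differentiating this weighted part yields $\sum_i\bigl[2r_i(d_i+\rho_i)\varrho_i^TS\dot\varrho_i+r_i\rho_i\dot d_i\bigr]$, i.e., the factor $(D+\rho)R$ appears to the left of $\dot\varrho$; with $SFC=-C^TC$, the coupling term then becomes \emph{exactly} $-\varrho^T[(D+\rho)\hat{\mathcal{L}}(D+\rho)\otimes C^TC]\varrho=-\overline\varrho^T(\hat{\mathcal{L}}\otimes C^TC)\overline\varrho$, where $\overline\varrho=((D+\rho)\otimes I_n)\varrho$. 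This is fully symmetric, and Lemma \ref{laplace} applies to $\overline\varrho$ since $\overline\varrho^T((D+\rho)^{-1}r\otimes\mathbf{1})=0$, giving the coercive bound \dref{lapineq} with the gain-dependent weight $(D+\rho)^2$ --- which is precisely the high-order negative term your version lacks. The leftover cross terms are then only of order $(D+\rho)R\otimes C^TC$ and $R\otimes C^TC$, and Young's inequality with the constant choice $\alpha\geq\frac{5N\lambda_{\max}(R)}{\lambda_2(\hat{\mathcal{L}})}$ closes the estimate, yielding $\dot V_1\leq\varrho^T[(D+\rho)R\otimes(A^TS+SA-2C^TC)]\varrho\leq 0$. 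With this replacement for your $V_\varrho$, the rest of your proposal (boundedness, convergence of the $d_i$, $\varrho\to 0$ by LaSalle or Barbalat, then $\psi\to 0$ and $\xi\to 0$) goes through and coincides with the paper's proof.
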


\begin{proof}
Since $F=-S^{-1}C^T$, it follows from \dref{lmic} that
$$(A+FC)^TS+S(A+FC)=A^TS+SA-2C^TC<0.$$
Thus, $A+FC$ is Hurwitz and $\zeta$ in \dref{derror11} asymptotically converges to zero.

Next, we show that $\varrho$ in \dref{derror12} converges to zero.
To this end, consider the Lyapunov function candidate
\begin{equation}\label{lya1}
\begin{aligned}
V_{1}=&\frac{1}{2}\sum_{i=1}^{N}r_i(2d_i+\rho_i)\varrho_i^TS\varrho_i
+\frac{1}{2}\sum_{i=1}^{N}r_i\tilde{d}_i^2,
\end{aligned}
\end{equation}
where $r=[r_1,\cdots,r_N]^T$ is the
left eigenvector of $\mathcal{L}$ associated with the zero eigenvalue,
$\tilde{d}_i\triangleq d_{i}-\alpha$, with $\alpha$
being a positive constant to be determined subsequentially.
Since Assumption \ref{assp1} holds, it follows from Lemma \ref{lem1} that $R\triangleq \mathrm{diag}(r_1,\cdots,r_N)>0$.
Since $d_i(0) > 0$ and $\dot{d}_i\geq0$, it follows that
$d_i(t)>0$.
Noting further that $\rho(t)\geq0$,
it is not difficult to see that $V_1$ is positive definite
with respect to $\varrho$ and $\tilde{d}$.


The time derivative of $V_{1}$ along the trajectory of \dref{derror12} is given by
\begin{equation}\label{dotlya1}
\begin{aligned}
\dot{V}_1&=\sum_{i=1}^{N}[2r_i(d_i+\rho_i)\varrho_i^TS\dot{\varrho}_i+r_i\rho_i\dot{d}_i]+\sum_{i=1}^{N}r_i\tilde{d}_i\dot{\tilde{d}}_i\\
&=\varrho^T[(D+\rho)R\otimes (SA+A^TS)\\
&\quad-(D+\rho)\hat{\mathcal{L}}(D+\rho)\otimes C^TC\\
&\quad+(\rho R+DR-\alpha R)\otimes C^TC]\varrho,
\end{aligned}
\end{equation}
where $\hat{\mathcal{L}}\triangleq R\mathcal{L}+\mathcal{L}^TR$.
Let $\overline{\varrho}=((D+\rho)\otimes I_n)\varrho$. Then, we have
\begin{equation*}
\begin{aligned}
\overline{\varrho}^T((D+\rho)^{-1}r\otimes \textbf{1})&=\varrho^T(r\otimes \textbf{1})\\
&=(w-v)^T(\mathcal{L}^Tr\otimes \textbf{1})=0,
\end{aligned}
\end{equation*}
where we have used the fact that $r^T\mathcal{L}=0$. Since every entry of $r$ is positive, it is obvious that every entry of $(D+\rho)^{-1}r\otimes \textbf{1}$ is also positive. In light of Lemma \ref{laplace}, we get
\begin{equation}\label{lapineq}
\begin{aligned}
\overline{\varrho}^T(\hat{\mathcal{L}}\otimes I_n)\overline{\varrho} &>\frac{\lambda_2(\hat{\mathcal{L}})}{N}\overline{\varrho}^T\overline{\varrho}\\
&=\frac{\lambda_2(\hat{\mathcal{L}})}{N}\varrho^T[(D+\rho)^2\otimes I_n]\varrho.
\end{aligned}
\end{equation}
Note that
\begin{equation}\label{drhoineq}
\begin{aligned}
\varrho^T[(D+\rho) R\otimes C^TC]\varrho\leq&\frac{\lambda_2(\hat{\mathcal{L}})}{4N}\varrho^T[(D+\rho)^2\otimes C^TC]\varrho\\
&+\frac{N}{\lambda_2(\hat{\mathcal{L}})}\varrho^T(R^2\otimes C^TC)\varrho,
\end{aligned}
\end{equation}
and
\begin{equation}\label{drhoalineq}
\begin{aligned}
&-\frac{\lambda_2(\hat{L})}{4N}\varrho^T[(D+\rho)^2\otimes C^TC]\varrho\\
&-\varrho^T[(\alpha R-\frac{N}{\lambda_2(\hat{\mathcal{L}})}R^2)\otimes C^TC]\varrho\\
&\qquad\qquad\leq-2\varrho^T[(D+\rho)R\otimes C^TC]\varrho,
\end{aligned}
\end{equation}
where we have used Lemma \ref{ineq} and chosen $\alpha\geq \frac{5N\lambda_{\max}(R)}{\lambda_2(\hat{\mathcal{L}})}$ to get the inequalities.
By substituting \dref{lapineq}, \dref{drhoineq}, and \dref{drhoalineq} into \dref{dotlya1}, we then obtain
\begin{equation}\label{psieta}
\begin{aligned}
\dot{V}_1&\leq\varrho^T[(D+\rho)R\otimes (SA+A^TS)]\varrho\\
&\quad-\frac{3\lambda_2(\hat{\mathcal{L}})}{4N}\varrho^T((D+\rho)^2\otimes C^TC)\varrho\\
&\quad-\varrho^T[(\alpha R-\frac{N}{\lambda_2(\hat{\mathcal{L}})}R^2)\otimes C^TC]\varrho\\
&\leq\varrho^T[(D+\rho)R\otimes (A^TS+SA-2C^TC)]\varrho\\
&\leq\varrho^T[D(0)R\otimes (A^TS+SA-2C^TC)]\varrho\\
&\leq0,
\end{aligned}
\end{equation}
where we have used the facts that $D(t)\geq D(0)$ and $\rho\geq0$ to arrive at the third inequality
and used \dref{lmic} to obtain the last inequality.
Consequently, $V_1(t)$ is bounded and so is each $d_i$. Noting that $\dot{d}_i\geq 0$, we assert that each coupling weight $d_i$ converges to some finite value. Furthermore, $\dot{V}_1\equiv0$
 implies that $\varrho\equiv0$.
By LaSalle's Invariance principle \cite{khalil2002nonlinear},
it follows that $\varrho$ asymptotically converges to zero.
In light of this, together with the fact that
$\varrho$ asymptotically converges to zero
and $D(t)$ is bounded, it is easy to see that $[\mathcal{L}(D+\rho)\otimes FC]\varrho$ asymptotically converges to zero.
Since $A+BK$ is Hurwitz and $\zeta$ converge asymptotically to zero, it follows that $\psi$ also asymptotically converges to zero.
In conclusion, we have shown that all $\zeta$, $\varrho$, and $\psi$ asymptotically converge to zero, which, in virtue of
the definitions of $\zeta$, $\varrho$, and $\psi$, implies that the consensus error $\xi$
converges asymptotically to zero. Thus, the agents achieve consensus. \hfill $\blacksquare$
\end{proof}

\begin{remark}
As shown in \cite{li2010consensus}, a necessary and sufficient condition for the existence of a solution $S>0$ to the LMI \dref{lmic} is that
$(A,C)$ is detectable. Therefore, an sufficient existence condition of an adaptive protocol \dref{controller1} satisfying Theorem \ref{leaderless1} is that $(A,B,C)$ is stabilizable and detectable. Note that the adaptive  protocol \dref{controller1} can be designed
by solving the algebraic Riccati equation: $\bar{P} A^T+A\bar{P}+I-\bar{P}C^TC\bar{P}=0$,
as in \cite{tuna2008lqr,zhang2011optimal}.
In this case, the parameters in \dref{controller1} can be chosen
as $F=-\bar{P} C^T$ and $\rho_i=\varrho_i^T\bar{P}^{-1}\varrho_i$.
Evidently, the adaptive protocol \dref{controller1} uses only the agent dynamics and the local output information,
and therefore is fully distributed.
\end{remark}

\begin{remark}
In contrast to the adaptive output feedback protocols in \cite{li2012adaptiveauto}, which are applicable to only undirected graphs, the adaptive protocol \dref{controller1} works for directed graphs,
provided that the graphs are strongly connected.
That this protocol is both fully distributed and only dependent on local output information is made possible by a novel
sequential observer design (SOD) architecture, which consists of a local observer and a graph-based distributed observer:
the local observer (the first equation in \dref{controller1}) estimates the state of each agent,
while the distributed observer (the second equation in \dref{controller1})
provides feedback based on estimated relative states, thus ensuring that $\zeta$, $\varrho$, and $\psi$ converge to zero and subsequently
the consensus error $\xi$ converges to zero.
In essence, the SOD method reduces the closed-loop network dynamics \dref{derror1} into a upper-triangular form,
where the first two dynamics \dref{derror11} and \dref{derror12} are independent and
\dref{derror13} relies on \dref{derror11} and \dref{derror12}. 
\end{remark}

Apart from the protocol \dref{controller1},
an alternative adaptive output feedback protocol can be constructed as follows:
\begin{equation}\label{controller2}
\begin{aligned}
&\dot{v}_i=Av_i+Bu_i+F(Cv_i-y_i),\\
&\dot{w}_i=Aw_i+Bu_i+(d_i+\rho_i)BK(\psi_i-\eta_i)+F(Cv_i-y_i),\\
&u_i=Kw_i,\\
&\dot{d}_i=(\psi_i-\eta_i)^T\Omega(\psi_i-\eta_i),\quad i=1,\cdots,N,
\end{aligned}
\end{equation}
where $\Omega\in \mathbf{R}^{n\times n}$ is a feedback gain matrix,
and the rest of the variables are defined as in \dref{controller1}.
The parameters $K$, $F$, $\Omega$, and $\rho_i$ in \dref{controller2}
need to be determined. The difference between these two protocols will be elaborated subsequently.

\begin{theorem}\label{leaderless2}
Suppose that Assumption 1 holds.
The consensus problem of the $N$ agents described by \dref{model1}
is solved by the adaptive output feedback protocol \dref{controller2},
whenever $F$ satisfying that $A+FC$ is Hurwitz, $K=-B^TP^{-1}$,
$\Omega=P^{-1}BB^TP^{-1}$, and $\rho_i=\varrho_i^TP^{-1}\varrho_i$
where $\varrho_i$ is defined as in \dref{derror12} and
$P>0$ is a solution to the LMI:
\begin{equation}\label{lmib}
PA^T+AP-2BB^T<0,
\end{equation}
In addition, the coupling weight $d_{i}$ converge to finite steady-state values.
\end{theorem}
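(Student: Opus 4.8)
The plan is to mirror the proof of Theorem \ref{leaderless1} via a state-feedback/output-injection duality, with the roles of $FC$ and $BK$ interchanged. First I would substitute \dref{controller2} into \dref{model1} and pass to the coordinates $\zeta = \eta - \xi$ and $\varrho = \psi - \eta$. A direct computation, identical in form to the one leading to \dref{derror11}--\dref{derror13}, shows that the closed loop again reduces to an upper-triangular cascade: $\dot{\zeta} = [I_N\otimes(A+FC)]\zeta$, the $\varrho$-subsystem $\dot{\varrho} = [I_N\otimes A + \mathcal{L}(D+\rho)\otimes BK]\varrho$ together with $\dot{d}_i = \varrho_i^T\Omega\varrho_i$, and $\dot{\psi} = [I_N\otimes(A+BK)]\psi + [\mathcal{L}(D+\rho)\otimes BK]\varrho + (I_N\otimes FC)\zeta$. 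The only change from \dref{derror11}--\dref{derror13} is that the coupling gain $FC$ is replaced by $BK$ and the adaptive-rate matrix $C^TC$ by $\Omega$. Since $F$ is chosen so that $A+FC$ is Hurwitz, $\zeta$ converges to zero exactly as before.

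The key enabling step is a congruence argument that identifies the correct dual objects. Setting $S = P^{-1}$ and pre- and post-multiplying the LMI \dref{lmib} by $S$ gives $A^TS + SA - 2SBB^TS < 0$, i.e. $A^TS + SA - 2\Omega < 0$ with $\Omega = P^{-1}BB^TP^{-1} = SBB^TS$. Because $K = -B^TP^{-1} = -B^TS$, we have $BK = -BB^TS$, whence $(A+BK)^TS + S(A+BK) = A^TS + SA - 2\Omega < 0$, so $A+BK$ is Hurwitz; moreover $S\,BK = -\Omega$. These two identities are the exact duals of the relations $F = -S^{-1}C^T$ and $S\,FC = -C^TC$ used in Theorem \ref{leaderless1}, and they are what make the $\Omega$ appearing in the coupling term (through $S\,BK$) coincide with the $\Omega$ appearing in the adaptive law.

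With these identities in hand I would run the Lyapunov analysis of Theorem \ref{leaderless1} verbatim, using the same candidate $V_1 = \frac{1}{2}\sum_{i=1}^N r_i(2d_i+\rho_i)\varrho_i^TS\varrho_i + \frac{1}{2}\sum_{i=1}^N r_i\tilde{d}_i^2$ with $S = P^{-1}$, $\rho_i = \varrho_i^TS\varrho_i$, and $\tilde{d}_i = d_i - \alpha$. Differentiating along the reduced $\varrho$-dynamics and using $S\,BK = -\Omega$ reproduces the expression \dref{dotlya1} with $C^TC$ replaced by $\Omega$ throughout. Invoking Lemmas \ref{laplace} and \ref{ineq} with the factorization $\Omega = K^TK$ (so that $K$ plays the role of $C$), and choosing $\alpha \geq \frac{5N\lambda_{\max}(R)}{\lambda_2(\hat{\mathcal{L}})}$, yields $\dot{V}_1 \leq \varrho^T[(D+\rho)R\otimes(A^TS+SA-2\Omega)]\varrho \leq 0$. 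Hence $V_1$ and each $d_i$ are bounded, and since $\dot{d}_i \geq 0$ each $d_i$ converges to a finite value; LaSalle's invariance principle then gives $\varrho \to 0$.

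Finally, I would close the cascade: since $A+BK$ is Hurwitz, $\zeta \to 0$, and $[\mathcal{L}(D+\rho)\otimes BK]\varrho \to 0$ (because $\varrho \to 0$ and $D$ is bounded), the $\psi$-dynamics force $\psi \to 0$, and thus $\xi = \psi - \varrho - \zeta \to 0$, which establishes consensus. The main obstacle is structural rather than computational: one must recognize that $\Omega = P^{-1}BB^TP^{-1}$ is precisely the dual adaptive-rate matrix and verify, through the $S = P^{-1}$ congruence, both that $A+BK$ is Hurwitz and that $S\,BK = -\Omega$. Once these are in place, the entire estimate of Theorem \ref{leaderless1} transfers under the substitution $C^TC \mapsto \Omega$ and $C \mapsto K$.
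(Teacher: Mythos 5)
Your proposal is correct and takes essentially the same route as the paper: the paper's own proof simply derives the cascade \dref{derror200}, replaces $S$ by $P^{-1}$ in the Lyapunov function ($V_2=\frac{1}{2}\sum_{i=1}^N r_i(2d_i+\rho_i)\varrho_i^TP^{-1}\varrho_i+\frac{1}{2}\sum_{i=1}^N r_i\tilde d_i^2$), and declares the rest analogous to Theorem \ref{leaderless1}. You additionally spell out the duality identities the paper leaves implicit --- $S=P^{-1}$ turns \dref{lmib} into $A^TS+SA-2\Omega<0$, $S\,BK=-\Omega=-K^TK$, and hence $A+BK$ Hurwitz --- which is exactly what justifies the paper's ``analogously'' and is a welcome completion rather than a deviation.
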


\begin{proof}
From \dref{controller2} and \dref{model1}, we can obtain the closed-loop network dynamics as follows:
\begin{equation}\label{derror200}
\begin{aligned}
\dot{\zeta}&=(I_N\otimes (A+FC))\zeta,\\
\dot{\varrho}&=[I_N\otimes A+\mathcal{L}(D+\rho)\otimes BK]\varrho,\\
\dot{\psi}&=[I_N\otimes (A+ BK)]\psi+[\mathcal{L}(D+\rho)\otimes BK]\varrho\\
&\quad+(I_N\otimes FC)\zeta,\\
\dot{d}_i&=\varrho_i^T\Omega\varrho_i,\quad i=1,\cdots,N,
\end{aligned}
\end{equation}
where the variables $\zeta$, $\varrho$, and $\psi$ are defined as in \dref{derror11}, \dref{derror12},
and \dref{derror13}, respectively.
The convergence of $\zeta$ in \dref{derror200} to zero is obvious. The convergence of $\varrho$ in \dref{derror200}
can be proved similarly as in the proof of Theorem 1, by using the Lyapunov function candidate
$$\begin{aligned}
V_{2}=&\frac{1}{2}\sum_{i=1}^{N}r_i(2d_i+\rho_i)\varrho_i^TP^{-1}\varrho_i
+\frac{1}{2}\sum_{i=1}^{N}r_i\tilde{d}_i^2.
\end{aligned}
$$
The rest of the proof can be completed analogously as in the proof of Theorem 1. We omit the details
for brevity.\hfill $\blacksquare$
\end{proof}

\begin{remark}
The LMI \dref{lmib} is dual to \dref{lmic}. Therefore, a sufficient condition for the existence of \dref{controller2} satisfying Theorem 2 is also
that $(A,B,C)$ is stabilizable and detectable.
Note that the adaptive protocol \dref{controller1}
transmits $Cv_i$ and $Cw_i$ between neighboring agents while the adaptive protocol \dref{controller2}
transmits $v_i$ and $w_i$. The sum of the dimensions of $Cv_i$ and $Cw_i$ is generally
lower than that of $v_i$ and $w_i$.
Therefore, the adaptive protocol \dref{controller1} is more favorable,
because of a lower communication burden.
\end{remark}

\section{Adaptive output feedback protocols for Leader-follower Graphs}\label{s4}
Theorem \ref{leaderless1} and Theorem \ref{leaderless2} in the previous section are applicable to strongly connected directed graphs.
In this section, we extend our analysis to leader-follower consensus problems,
alternatively known as distributed tracking.

Consider a group of $N+1$ agents with general linear dynamics described by \dref{model1},
indexed by $0,\cdots,N$.
Suppose that the agents indexed by $1,\cdots,N$, are the $N$ followers and the agent indexed by 0 is the leader.
Under many circumstances, the leader may need to implement control action to regulate the final consensus trajectory.
We assume that in general the leader's control input is bounded, i.e.,
the following assumption holds.
\begin{assumption}\label{omega}
There exists a positive constant $\omega$ such that $\|u_0\|\leq\omega$.
\end{assumption}
Moreover, we assume that the communication graph $\widehat{\mathcal{G}}$ among the $N+1$ agents satisfies
\begin{assumption}\label{assp2}
The graph $\widehat{\mathcal{G}}$ contains a directed spanning tree with the leader as the root node.
\end{assumption}

Under Assumption \ref{assp2}, the Laplacian matrix associated with $\widehat{\mathcal{G}}$ can be partitioned as $\widehat{\mathcal{L}}=
\begin{bmatrix}
0& 0_{1\times N}\\
\widehat{\mathcal{L}}_2&\widehat{\mathcal{L}}_1
\end{bmatrix}$,
where $\widehat{\mathcal {L}}_2\in\mathbf{R}^{N\times 1}$ and $\widehat{\mathcal
{L}}_1\in\mathbf{R}^{N\times N}$.
It is easy to verify that $\widehat{\mathcal
{L}}_1$ is a nonsingular M-matrix.


In this section we propose new distributed output feedback consensus protocols
that solve the leader-follower consensus problem, in the sense that $\lim_{t\rightarrow \infty}\|x_i(t)- x_0(t)\|=0$,
$\forall\,i=1,\cdots,N.$

\subsection{Discontinuous Adaptive Consensus Protocols}

Based on the relative estimates of the states of neighboring agents, the following distributed discontinuous adaptive controller is proposed for each follower:
\begin{equation}\label{controller3}
\begin{aligned}
\dot{\tilde{v}}_i=&A\tilde{v}_i+Bu_i+F(C\tilde{v}_i-y_i),\\
\dot{\tilde{w}}_i=&A\tilde{w}_i+B[u_i-\beta h(B^TS(\tilde{\psi}_i-\tilde{\eta}_i))]\\&+(d_i+\rho_i)FC(\tilde{\psi}_i-\tilde{\eta}_i)+F(C\tilde{v}_i-y_i),\\
u_i=&K[\tilde{w}_i-\beta h(B^TQ\tilde{\eta}_i)],\\
\dot{d}_i=&(\tilde{\psi}_i-\tilde{\eta}_i)^TC^TC(\tilde{\psi}_i-\tilde{\eta}_i),\quad i=1,\cdots,N,
\end{aligned}
\end{equation}
where 
$\dot{\tilde{v}}_0=A\tilde{v}_0+Bu_0+F(C\tilde{v}_0-y_0)$, $w_0=0$, $\tilde{\psi}_i\triangleq\sum_{j=0}^{N}a_{ij}(\tilde{w}_{i}-\tilde{w}_{j})$, $\tilde{\eta}_i\triangleq\sum_{j=0}^{N}a_{ij}(\tilde{v}_{i}-\tilde{v}_{j})$, $\rho_i=(\tilde{\psi}_i-\tilde{\eta}_i)^TS(\tilde{\psi}_i-\tilde{\eta}_i)$,
$S>0$ is a solution to the LMI \dref{lmic}, $Q>0$, $\beta$ is a positive constant, $d_i$ denotes the time-varying coupling weight associated with the $i$-th follower with $d_i(0)\geq0$, and
the
nonlinear function $h(\cdot)$ is defined such that
for $z\in\mathbf{R}^n$,
\begin{equation}\label{satu2}
h(z)=\begin{cases}\frac{z}{\|z\|} & \text{if}~\|z\|\neq 0,\\
0 & \text{if}~\|z\|=0.
\end{cases}
\end{equation}
In \dref{controller3}, the parameters $K$, $F$, $Q$, and $\beta$ are to be determined.

Let $\tilde{\xi}_i\triangleq\sum_{j=0}^{N}a_{ij}(x_{i}-x_{j}),i=1,\cdots,N$, $\tilde{\xi}\triangleq[\tilde{\xi}_{1}^{T},\cdots,\tilde{\xi}_{N}^{T}]^{T}$, $\tilde{\eta}\triangleq[\tilde{\eta}_{1}^{T},\cdots,\tilde{\eta}_{N}^{T}]^{T}$, $\tilde{\psi}\triangleq[\tilde{\psi}_{1}^{T},\cdots,\tilde{\psi}_{N}^{T}]^{T}$, and $x\triangleq[x_{1}^{T},\cdots,x_{N}^{T}]^{T}$, $\tilde{v}\triangleq[\tilde{v}_{1}^{T},\cdots,\tilde{v}_{N}^{T}]^{T}$, $\tilde{w}\triangleq[\tilde{w}_{1}^{T},\cdots,\tilde{w}_{N}^{T}]^{T}$. Then, we have
\begin{equation}\label{xi}
\begin{aligned}
\tilde{\xi}&=(\widehat{\mathcal{L}}_1\otimes I_n)(x-\textbf{1}\otimes x_0),\\
\tilde{\eta}&=(\widehat{\mathcal{L}}_1\otimes I_n)(\tilde{v}-\textbf{1}\otimes \tilde{v}_0),\\
\tilde{\psi}&=(\widehat{\mathcal{L}}_1\otimes I_n)\tilde{w}.
\end{aligned}
\end{equation}
By the first equality in \dref{xi}, it is easy to see that the leader-follower consensus problem is solved if and only if the consensus error $\tilde{\xi}$ asymptotically converges to zero.

By substituting protocol \dref{controller3} into \dref{model1}, we can get the closed-loop dynamics of the network as follows:
$$\begin{aligned}
\dot{\tilde{\xi}}&=(I_N\otimes A)\tilde{\xi}+(I_N\otimes BK)\tilde{\psi}\\&\quad-(\widehat{\mathcal{L}}_1\otimes B)[\beta M(\tilde{\eta})+\textbf{1}\otimes u_0],\\
\dot{\tilde{\eta}}&=(I_N\otimes A)\tilde{\eta}+(I_N\otimes BK)\tilde{\psi}+(I_N\otimes FC)(\tilde{\eta}-\tilde{\xi})\\&\quad-(\widehat{\mathcal{L}}_1\otimes B)[\beta M(\tilde{\eta})+\textbf{1}\otimes u_0],\\
\dot{\tilde{\psi}}&=[I_N\otimes (A+BK)]\tilde{\psi}+[\widehat{\mathcal{L}}_1(D+\rho)\otimes FC](\tilde{\psi}-\tilde{\eta})\\&\quad-(\widehat{\mathcal{L}}_1\otimes B)\beta [M(\tilde{\eta})+H(\tilde{\psi}-\tilde{\eta})]\\
&\quad+(I_N\otimes FC)(\tilde{\eta}-\tilde{\xi})+(\widehat{\mathcal{L}}_1\otimes FC)[\textbf{1}\otimes(v_0-x_0)],\\
\dot{d}_i&=(\tilde{\psi}_i-\tilde{\eta}_i)^TC^TC(\tilde{\psi}_i-\tilde{\eta}_i),\quad i=1,\cdots,N,
\end{aligned}$$
where $H(\tilde{\psi}-\tilde{\eta})\triangleq[h(B^TS(\tilde{\psi}_1-\tilde{\eta}_1))^T,\cdots,h(B^TS(\tilde{\psi}_N-\tilde{\eta}_N))^T]^T$, $M(\tilde{\eta})\triangleq[h(B^TQ\eta_1)^T,\cdots,h(B^TQ\eta_N)^T]^T$,
and $D$, $\rho$ are defined as in \dref{derror1}, .
Let $\tilde{\zeta}\triangleq \tilde{\eta}-\tilde{\xi}$ and $\tilde{\varrho}\triangleq[\tilde{\varrho}_1^T,\cdots,\tilde{\varrho}_N^T]^T=\tilde{\psi}-\tilde{\eta}$.
Then, we obtain 
\begin{equation}\label{derror2}
\begin{aligned}
\dot{\tilde{\zeta}}=&[I_N\otimes (A+FC)]\tilde{\zeta},\\
\dot{\tilde{\varrho}}=&[I_N\otimes A+\widehat{\mathcal{L}}_1(D+\rho)\otimes FC]\tilde{\varrho}\\
&-(\widehat{\mathcal{L}}_1\otimes B)(\beta H(\tilde{\varrho})-\textbf{1}\otimes u_0)\\
&+(\widehat{\mathcal{L}}_1\otimes FC)(\textbf{1}\otimes e_0),\\
\dot{\eta}=&[I_N\otimes (A+BK)]\tilde{\eta}+(I_N\otimes BK)\tilde{\varrho}\\
&+(I_N\otimes FC)\tilde{\zeta}-(\widehat{\mathcal{L}}_1\otimes B)[\beta M(\tilde{\eta})+\textbf{1}\otimes u_0],\\
\dot{d}_i=&\tilde{\varrho}_i^TC^TC\tilde{\varrho}_i,\quad i=1,\cdots,N,
\end{aligned}
\end{equation}
where $e_0=\tilde{v}_0-x_0$ is defined as the state estimation error of the leader.

The following result provides a sufficient condition ensuring that the adaptive protocol \dref{controller3}
achieves leader-follower consensus.

\begin{theorem}\label{thmlf1}
Suppose that Assumptions \ref{omega} and \ref{assp2} holds.
Then, the leader-follower consensus problem of the agents in \dref{model1} can be solved under the adaptive output feedback protocol \dref{controller3} where $K$, $F$, and $\rho_i$ are designed as in Theorem \ref{leaderless1}, $\beta$ is chosen such that $\beta\geq\omega$,
and $Q>0$ satisfies
\begin{equation}\label{lmic2}
Q(A+BK)+(A+BK)^TQ>0.
\end{equation}
In addition, each coupling weight $d_i$ converges to some finite steady-state value.
\end{theorem}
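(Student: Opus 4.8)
The plan is to exploit the upper-triangular cascade that the sequential observer design induces in the closed loop \dref{derror2}: I would show in turn that $\tilde\zeta$, $\tilde\varrho$, and $\tilde\eta$ all tend to zero, and then recover leader--follower consensus from $\tilde\xi=\tilde\eta-\tilde\zeta$, since $\tilde\xi=(\widehat{\mathcal{L}}_1\otimes I_n)(x-\textbf{1}\otimes x_0)$ and $\widehat{\mathcal{L}}_1$ is nonsingular. Two ingredients come for free. Because $F=-S^{-1}C^T$ with $S$ solving \dref{lmic}, the computation in the proof of Theorem \ref{leaderless1} gives $(A+FC)^TS+S(A+FC)<0$, so $A+FC$ is Hurwitz and $\tilde\zeta$ decays exponentially. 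The leader's estimation error obeys $\dot e_0=(A+FC)e_0$, hence $e_0\to0$ exponentially as well. Both $\tilde\zeta$ and $e_0$ may therefore be regarded as exponentially vanishing signals feeding the remaining subsystems.

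Next I would treat the $\tilde\varrho$-subsystem, whose linear part $I_N\otimes A+\widehat{\mathcal{L}}_1(D+\rho)\otimes FC$ is the directed-graph analogue of \dref{derror12}. Here $\widehat{\mathcal{L}}_1$ is a nonsingular M-matrix, so Lemma \ref{Mmatrix} furnishes a diagonal $G=\mathrm{diag}(g_1,\dots,g_N)>0$ with $G\widehat{\mathcal{L}}_1+\widehat{\mathcal{L}}_1^TG>0$; the natural candidate is $V_\varrho=\frac12\sum_i g_i(2d_i+\rho_i)\tilde\varrho_i^TS\tilde\varrho_i+\frac12\sum_i g_i\tilde d_i^2$. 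The purely linear contribution to $\dot V_\varrho$ is rendered nonpositive just as in \dref{dotlya1}--\dref{psieta}, with $\hat{\mathcal{L}}$ and $R$ replaced by $G\widehat{\mathcal{L}}_1+\widehat{\mathcal{L}}_1^TG$ and $G$, again via Lemma \ref{ineq} and a sufficiently large offset $\alpha$; the positive definiteness supplied by Lemma \ref{Mmatrix} removes the null-space complication present in the strongly connected case. The essential new terms are the discontinuous action $-(\widehat{\mathcal{L}}_1\otimes B)\beta H(\tilde\varrho)$ and the leader drive $(\widehat{\mathcal{L}}_1\otimes B)(\textbf{1}\otimes u_0)$. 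Pairing these with the gradient of $V_\varrho$, the diagonal part of $\widehat{\mathcal{L}}_1$ produces $-2\beta\sum_i g_i(d_i+\rho_i)[\widehat{\mathcal{L}}_1]_{ii}\|B^TS\tilde\varrho_i\|$, the off-diagonal entries are bounded using $\|h(\cdot)\|\le1$, and the row-sum identity $\widehat{\mathcal{L}}_1\textbf{1}=-\widehat{\mathcal{L}}_2$ collapses the total into $2\sum_i g_i(d_i+\rho_i)a_{i0}(\omega-\beta)\|B^TS\tilde\varrho_i\|\le0$ whenever $\beta\ge\omega$ (Assumption \ref{omega}). The $e_0$ term is absorbed as a decaying perturbation by Lemma \ref{ineq}. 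Boundedness of $V_\varrho$, monotonicity $\dot d_i\ge0$, and a nonsmooth LaSalle/Barbalat argument then give $\tilde\varrho\to0$ and convergence of each $d_i$ to a finite value.

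With $\tilde\varrho,\tilde\zeta\to0$, the last equation of \dref{derror2} reduces to $\dot{\tilde\eta}=[I_N\otimes(A+BK)]\tilde\eta-(\widehat{\mathcal{L}}_1\otimes B)[\beta M(\tilde\eta)+\textbf{1}\otimes u_0]$ plus vanishing inputs. I would analyse it with a separate weighted quadratic $V_\eta=\sum_i\phi_i\tilde\eta_i^TQ\tilde\eta_i$, $\phi_i>0$: the algebraic condition \dref{lmic2} on $Q$ governs the contribution of the block $A+BK$, while the discontinuous term $-\beta(\widehat{\mathcal{L}}_1\otimes B)M(\tilde\eta)$, with $M_i=h(B^TQ\tilde\eta_i)$, cancels $\textbf{1}\otimes u_0$ by exactly the same diagonal-plus-row-sum balance used for $\tilde\varrho$, again giving a nonpositive residue for $\beta\ge\omega$. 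Treating $\tilde\varrho$, $\tilde\zeta$ as exponentially vanishing perturbations and invoking the nonsmooth invariance principle yields $\tilde\eta\to0$, whence $\tilde\xi=\tilde\eta-\tilde\zeta\to0$ and leader--follower consensus follows.

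The step I expect to be hardest is the treatment of the discontinuous leader-compensation terms under the \emph{asymmetric} coupling $\widehat{\mathcal{L}}_1$. On an undirected follower graph the sliding terms cancel agent-by-agent, but here the cross terms $\sum_{i,j}[\,\cdot\,]_{ij}\,(B^TS\tilde\varrho_i)^Th(B^TS\tilde\varrho_j)$ (and the $\tilde\eta$ analogue) do not split per agent, so the off-diagonal entries of $\widehat{\mathcal{L}}_1$ must be controlled by the uniform bound $\|h(\cdot)\|\le1$ and then reassembled through $\widehat{\mathcal{L}}_1\textbf{1}=-\widehat{\mathcal{L}}_2$; coordinating this bound with the M-matrix weighting $G$, with the choice of $Q$ in \dref{lmic2}, and with the discontinuity (which forces Filippov solutions and a nonsmooth invariance argument) is the crux. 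A secondary difficulty is ensuring that the exponentially decaying $e_0$ does not destabilise the adaptive gains $d_i$, which I would settle by showing $V_\varrho$ remains bounded so that the $d_i$ stay finite.
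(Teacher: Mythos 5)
Your proposal is correct and follows essentially the same route as the paper's proof: the same $\tilde\zeta\rightarrow\tilde\varrho\rightarrow\tilde\eta\rightarrow\tilde\xi$ cascade, the same $G$-weighted adaptive Lyapunov function $\frac12\sum_i g_i(2d_i+\rho_i)\tilde\varrho_i^TS\tilde\varrho_i+\frac12\sum_i g_i\tilde d_i^2$ built on Lemma \ref{Mmatrix}, the same cancellation of the discontinuous and leader terms via $\|h(\cdot)\|\le 1$, the row-sum identity $\widehat{\mathcal{L}}_1\mathbf{1}=-\widehat{\mathcal{L}}_2$ and $\beta\ge\omega$ (the paper's \dref{norm}--\dref{u0}), and a Barbalat-type argument (note LaSalle is not strictly applicable, since $u_0(t)$ makes the error dynamics nonautonomous). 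The only deviations are bookkeeping: where you absorb $e_0$, and later $\tilde\zeta,\tilde\varrho$, as vanishing perturbations in a cascade/ISS argument, the paper instead augments its Lyapunov functions, adding $\gamma e_0^TSe_0$ to $V_3$ and using the composite $V_4=\tilde\eta^T(I_N\otimes Q)\tilde\eta+\gamma_1\tilde\zeta^T(I_N\otimes S)\tilde\zeta+\gamma_2V_3$ with $\gamma,\gamma_1,\gamma_2$ sized through Lemma \ref{ineq}; both devices work. One caveat: you describe $\tilde\varrho$ as an \emph{exponentially} vanishing perturbation, but the Barbalat step yields only asymptotic convergence with no rate; this does not break your final step, because the bound $\dot V_\eta\le -c\|\tilde\eta\|^2+C\|\tilde\eta\|(\|\tilde\varrho\|+\|\tilde\zeta\|)$ (with the discontinuous and leader terms nonpositive) gives an ISS-type property for which merely vanishing inputs suffice, but the word ``exponentially'' should be dropped.
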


\begin{proof}
The convergence of $\zeta$ in \dref{derror2} to zero is obvious. To show the convergence of $\varrho$ in \dref{derror2}, we construct the
Lyapunov function candidate
\begin{equation}\label{lya3}
\begin{aligned}
V_{3}=&\frac{1}{2}\sum_{i=1}^Ng_i(2d_i+\rho_i)\tilde{\varrho}_i^TS\tilde{\varrho}_i+\frac{1}{2}\sum_{i=1}^Ng_i\tilde{d}_i^2+\gamma e_0^TSe_0,
\end{aligned}
\end{equation}
where $G\triangleq \text{diag}(g_1,\cdots,g_N)$ is a positive definite matrix such that $G\widehat{\mathcal{L}}_1+\widehat{\mathcal{L}}_1^TG>0$, $\tilde{d}_i\triangleq d_{i}-\alpha$, $\alpha$ and $\gamma$ are positive constants to be determined later. Due to the fact that $\widehat{\mathcal{L}}_1$ is a nonsingular M-matrix, the existence of such a positive definite matrix $G$ is ensured by Lemma \ref{Mmatrix}. Then, it is not difficult to see that $V_3$ is positive definite with respect to the variables $\tilde{\varrho}_i$,
$\tilde{d}_i$, and $e_0$.

The time derivative of $V_{3}$ along the trajectory of \dref{derror2} is given by
\begin{equation}\label{dotlya3}
\begin{aligned}
\dot{V}_3=&\sum_{i=1}^N[2g_i(d_i+\rho_i)\tilde{\varrho}_i^TS\dot{\tilde{\varrho}}_i+g_i\rho_i\dot{d_i}]+\sum_{i=1}^Ng_i\tilde{d}_i\dot{\tilde{d}}_i\\
&+\gamma e_0^T[S(A+FC)+(A+FC)^TS]e_0\\
\leq&\tilde{\varrho}^T[(D+\rho)G\otimes(SA+A^TS)-\lambda_0(D+\rho)^2\otimes C^TC\\
&+(\rho+D-\alpha I_N)G\otimes C^TC]\tilde{\varrho}-\gamma e_0^TWe_0\\
&-2\tilde{\varrho}^T[(D+\rho)G\widehat{\mathcal{L}}_1\otimes SB](\beta H(\tilde{\varrho})-\textbf{1}\otimes u_0)\\
&-2\tilde{\varrho}^T[(D+\rho)G\widehat{\mathcal{L}}_1\otimes C^TC](\textbf{1}\otimes e_0),
\end{aligned}
\end{equation}
where $\lambda_0$ denotes the smallest eigenvalue of $G\widehat{\mathcal{L}}_1+\widehat{\mathcal{L}}_1^TG$ and $W\triangleq-SA-A^TS+2C^TC$ is a positive definite matrix.
Using Lemma \ref{ineq}, we then find that
\begin{equation}\label{ineq1}
\begin{aligned}
\tilde{\varrho}^T[(\rho+D)G\otimes C^TC]\tilde{\varrho}
\leq&\frac{\lambda_0}{3}\tilde{\varrho}^T[(D+\rho)^2\otimes C^TC]\tilde{\varrho}\\
&+\frac{3}{4\lambda_0}\tilde{\varrho}^T[G^2\otimes C^TC]\tilde{\varrho},
\end{aligned}
\end{equation}
and
\begin{equation}\label{ineq2}
\begin{aligned}
&-2\tilde{\varrho}^T[(D+\rho)G\widehat{\mathcal{L}}_1\otimes C^TC](\textbf{1}\otimes e_0)\\
&\qquad\leq\frac{\lambda_0}{3}\tilde{\varrho}^T[(D+\rho)^2\otimes C^TC]\tilde{\varrho}\\
&\quad\qquad+\frac{3\lambda_{\max}(C^TC)\widehat{\mathcal{L}}_2^TGG\widehat{\mathcal{L}}_2}{\lambda_0\lambda_{\min}(W)}e_0^TWe_0.
\end{aligned}
\end{equation}
Note that
\begin{equation}\label{norm}
\begin{aligned}
\tilde{\varrho}_i^TSBh(B^TS\tilde{\varrho}_i)&=\|B^TS\tilde{\varrho}_i\|,\\
\tilde{\varrho}_i^TSBh(B^TS\tilde{\varrho}_j)&\leq\|B^TS\tilde{\varrho}_i\|\|h(B^TS\tilde{\varrho}_j)\|\\&=\|B^TS\tilde{\varrho}_i\|.
\end{aligned}
\end{equation}
This leads to
\begin{equation}\label{h}
\begin{aligned}
-\tilde{\varrho}^T[(D+&\rho)G\widehat{\mathcal{L}}_1\otimes SB]\beta H(\tilde{\varrho})\\
&=-\sum_{i=1}^N(d_i+\rho_i)g_i\beta\tilde{\varrho}_i^TSB[a_{i0}h(B^TS\tilde{\varrho}_i)\\
&\quad+\sum_{j=1}^Na_{ij}(h(B^TS\tilde{\varrho}_i)-h(B^TS\tilde{\varrho}_j))]\\
&\leq-\sum_{i=1}^N(d_i+\rho_i)g_i\beta a_{i0}\|B^TS\tilde{\varrho}_i\|,
\end{aligned}
\end{equation}
where we have used \dref{norm} to arrive at the inequality.
Noting that $\widehat{\mathcal{L}}_1\textbf{1}=-\widehat{\mathcal{L}}_2$, we have
\begin{equation}\label{u0}
\begin{aligned}
\tilde{\varrho}^T[(D+&\rho)G\widehat{\mathcal{L}}_1\otimes SB](\textbf{1}\otimes u_0)\\
&=-\sum_{i=1}^N(d_i+\rho_i)g_ia_{i0}\tilde{\varrho}_i^TSBu_0\\
&\leq\sum_{i=1}^N(d_i+\rho_i)g_i\omega a_{i0}\|B^TS\tilde{\varrho}_i\|.
\end{aligned}
\end{equation}
Substituting \dref{ineq1}, \dref{ineq2}, \dref{h}, and \dref{u0} into \dref{dotlya3} yields
\begin{equation}\label{dotlya32}
\begin{aligned}
\dot{V}_3\leq&\tilde{\varrho}^T[(D+\rho)G\otimes(SA+A^TS)-\frac{\lambda_0}{3}(D+\rho)^2\otimes C^TC\\
&-(\alpha G-\frac{3}{4\lambda_0}G^2)\otimes C^TC]\tilde{\varrho}-e_0^TWe_0,
\end{aligned}
\end{equation}
where we have chosen $\gamma=1+\frac{3\lambda_{\max}(C^TC)\widehat{\mathcal{L}}_2^TGG\widehat{\mathcal{L}}_2}{\lambda_0\lambda_{\min}(W)}$.
By selecting
\begin{equation}\label{alpha}
\alpha\geq\frac{15\lambda_{\max}(G)}{4\lambda_0},
\end{equation}
it follows from Lemma \ref{ineq} that
\begin{equation}\label{ineq3}
\begin{aligned}
&\tilde{\varrho}^T[(\frac{\lambda_0}{3}(D+\rho)^2+(\alpha G-\frac{3}{4\lambda_0}G^2))\otimes C^TC]\tilde{\varrho}\\
&\qquad\geq\tilde{\varrho}^T[(D+\rho)G\otimes 2C^TC]\tilde{\varrho}.
\end{aligned}
\end{equation}
Substituting \dref{ineq3} into \dref{dotlya32} gives
\begin{equation}\label{dotlya33}
\begin{aligned}
\dot{V}_3\leq&-\tilde{\varrho}^T[(D+\rho)G\otimes W]\tilde{\varrho}-e_0^TWe_0,
\end{aligned}
\end{equation}

Let $\chi=[\tilde{\varrho}^T(\sqrt{(D+\rho)G}\otimes I_n),e_0^T]^T$. Then we can get from \dref{dotlya33} that
\begin{equation}\label{dotlya34}
\begin{aligned}
\dot{V}_{3}&\leq -\chi^T(I_{N+1}\otimes W)\chi\leq 0,
\end{aligned}
\end{equation}
where the last inequality follows directly from the definition of $W$.
As a result, $V_3(t)$ is bounded, and so are $\tilde{\varrho}$, $e_0$ and $d_i$, which, by the definition of $\chi$, implies that $\chi$ is bounded. Since by Assumption \ref{omega}, $u_0$ is bounded, this implies that $\dot{\tilde{\varrho}}$ is bounded, which further implies that $\dot{\chi}$ is bounded. Since $V_3(t)$ is nonincreasing and bounded from below by zero, it has a finite limit $V_3^{\infty}$ as $t\rightarrow \infty$. Integrating the first inequality of \dref{dotlya34}, we obtain
$$\int_0^{\infty}\chi^T(I_{N+1}\otimes W)\chi dt\leq V_3(0)-V_3^{\infty}.$$
Thus, $\int_0^{\infty}\chi^T(I_{N+1}\otimes W)\chi dt$ exists and is finite. In view of the fact that both $\chi$ and $\dot{\chi}$ are bounded, it is straightforward to see that $2\chi^T(I_{N+1}\otimes W)\dot{\chi}$ is also bounded, which in turn ensures the uniform continuity of $\chi^T(I_{N+1}\otimes W)\chi$. Therefore, by Barbalat's Lemma \cite{khalil2002nonlinear}, we can establish that $\chi^T(I_{N+1}\otimes W)\chi\rightarrow 0$ as $t\rightarrow \infty$. Noting that $\chi^T(I_{N+1}\otimes W)\chi\rightarrow 0$ equals to $\chi\rightarrow 0$ and thereby $\tilde{\varrho}$ asymptotically converges to zero. Noting that $\dot{d}_i\geq 0$, the boundedness of $d_i$ implies that each coupling weight $d_i$ converges to some finite value.

Next, we show the convergence of $\tilde{\eta}$ in \dref{derror2}. Consider the following Lyapunov function candidate:
\begin{equation}\label{lya4}
\begin{aligned}
V_{4}=&\tilde{\eta}^T(I_N\otimes Q)\tilde{\eta}+\gamma_1\tilde{\zeta}^T(I_N\otimes S)\tilde{\zeta}+\gamma_2V_3,
\end{aligned}
\end{equation}
where $\gamma_1$ and $\gamma_2$ are positive constants to be determined later. Since $Q$ and $S$ are positive definite,
it is easy to see that $V_4$ is also positive definite with respect to $\tilde{\eta}$, $\tilde{\zeta}$, $\tilde{\varrho}_i$,
$\tilde{d}_i$, and $e_0$.
The time derivative of $V_{4}$ along the trajectory of \dref{derror2} is given by
\begin{equation}\label{dotlya4}
\begin{aligned}
\dot{V}_4=&\tilde{\eta}^T[I_N\otimes (Q(A+BK)+(A+BK)^TQ)]\tilde{\eta}\\
&+2\tilde{\eta}^T(I_N\otimes QBK)\tilde{\varrho}+2\tilde{\eta}^T(I_N\otimes QFC)\tilde{\zeta}\\
&-2\tilde{\eta}^T(\widehat{\mathcal{L}}_1\otimes QB)(\beta M(\tilde{\eta})+\textbf{1}\otimes u_0)+\gamma_2\dot{V}_3\\
&+\gamma_1\tilde{\zeta}^T[I_N\otimes (S(A+FC)+(A+FC)^TS)]\tilde{\zeta}.
\end{aligned}
\end{equation}

By using Lemma \ref{ineq}, we can get that
\begin{equation}\label{ineq4}
\begin{aligned}
&2\tilde{\eta}^T(I_N\otimes QBK)\tilde{\varrho}
\leq\frac{1}{4}\tilde{\eta}^T(I_N\otimes X)\tilde{\eta}+\frac{4\lambda_{\max}^2(\Gamma)}{\lambda_{\min}(X)}\tilde{\varrho}^T\tilde{\varrho},\\
&2\tilde{\eta}^T(I_N\otimes QFC)\tilde{\zeta}
\leq\frac{1}{4}\tilde{\eta}^T(I_N\otimes X)\tilde{\eta}+\frac{4\lambda_{\max}^2(QFC)}{\lambda_{\min}(X)}\tilde{\zeta}^T\tilde{\zeta},
\end{aligned}
\end{equation}
where $X=-(QA+A^TQ-2\Gamma)$ is a positive definite matrix and $\Gamma=QBB^TQ$.
Substituting \dref{dotlya33} and \dref{ineq4} into \dref{dotlya4}, we obtain that
\begin{equation}\label{dotlya41}
\begin{aligned}
\dot{V}_4\leq&-\frac{1}{2}\tilde{\eta}^T[I_N\otimes X]\tilde{\eta}\\
&-2\tilde{\eta}^T(\widehat{\mathcal{L}}_1\otimes QB)(\beta M(\tilde{\eta})+\textbf{1}\otimes u_0),
\end{aligned}
\end{equation}
where we have used the fact that $D-I>0$ and chosen
\begin{equation}\label{gamma12}
\begin{aligned}
\gamma_1 &\geq\frac{4\lambda_{\max}^2(QFC)}{\lambda_{\min}(X)\lambda_{\min}(W)},\\ \gamma_2 &\geq\frac{4\lambda_{\max}^2(\Gamma)}{\lambda_{\min}(X)\lambda_{\min}(G)\lambda_{\min}(W)}.
\end{aligned}
\end{equation}

Similarly as in \dref{h} and \dref{u0}, we can show that
$$-2\tilde{\eta}^T(\widehat{\mathcal{L}}_1\otimes QB)(\beta M(\tilde{\eta})+\textbf{1}\otimes u_0)\leq0.$$
Thus, we follows from \dref{dotlya41} that
$$\begin{aligned}
\dot{V}_4\leq&-\frac{1}{2}\tilde{\eta}^T[I_N\otimes X]\tilde{\eta}
\leq0.
\end{aligned}$$
Therefore, $V_4$ is bounded and so is $\tilde{\eta}$. By following the same steps in showing the convergence of $\tilde{\varrho}$, we are able to obtain that $\tilde{\eta}$ asymptotically converges to zero, which, in virtue of the definitions of $\tilde{\zeta}$ and $\tilde{\eta}$, further implies that the consensus error $\xi$ asymptotically converges to zero. That is, the leader-follower consensus problem is solved.
\hfill $\blacksquare$
\end{proof}

Compared to the adaptive protocol \dref{controller1} for the leaderless case in the previous section,
the adaptive protocol \dref{controller3} contains the nonlinear components $h(B^TS\tilde{\varrho}_i)$ and $h(B^TQ\tilde{\eta}_i)$,
which are included to deal
with the effect of the leader's nonzero control input. When the leader's control input $u_0=0$, the adaptive
protocol \dref{controller3} with the nonlinear components removed reduces to
\begin{equation}\label{controller3q}
\begin{aligned}
\dot{\tilde{v}}_i&=A\tilde{v}_i+Bu_i+F(C\tilde{v}_i-y_i),\\
\dot{\tilde{w}}_i&=Aw_i+Bu_i+(d_i+\rho_i)FC(\tilde{\psi}_i-\tilde{\eta}_i)+F(C\tilde{v}_i-y_i),\\
u_i&=K\tilde{w}_i,\\
\dot{d}_i&=(\psi_i-\tilde{\eta}_i)^TC^TC(\tilde{\psi}_i-\tilde{\eta}_i),\quad i=1,\cdots,N.
\end{aligned}
\end{equation}
Evidently, as a direct consequence of Theorem 3, the adaptive protocol \dref{controller3q} with parameters given as in Theorem 3
can achieve leader-follower consensus for the agents in \dref{model1}
with $u_0=0$, provided that the communication graph satisfies Assumption \ref{assp2}.

For the special case where the relative state information among neighboring agents is available,
we can present the following state feedback adaptive protocol:
\begin{equation}\label{consf2}
\begin{aligned}
u_{i}&=(d_i+\rho_i)K\tilde{\xi}_i-\beta h(B^TP^{-1}\tilde{\xi}_i),\\
\dot{d}_i&=\tilde{\xi}_i^T\Omega\tilde{\xi}_i,\quad  i=1,\cdots,N,
\end{aligned}
\end{equation}
where $\tilde{\xi}_i$ is defined as in \dref{xi}
and $P>0$ is a solution to the LMI \dref{lmib}.

\begin{corollary}\label{corollary1}
Suppose that Assumptions \ref{omega} and \ref{assp2} hold.
Then, the leader-follower consensus problem of the agents in \dref{model1} can be solved by the adaptive protocol \dref{consf2} with $K$, $\Omega$, $\rho_i$ designed as in Theorem \ref{leaderless2} and $\beta\geq\omega$.
\end{corollary}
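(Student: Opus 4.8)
The plan is to specialize the argument of Theorem \ref{thmlf1} to the state-feedback setting, where the absence of an observer removes the variables $\tilde{\zeta}$ and $e_0$ entirely and reduces the analysis to the single consensus error $\tilde{\xi}$. The quadratic estimates will parallel those of Theorem \ref{leaderless2} (built on the dual LMI \dref{lmib}), while the discontinuous terms will be handled exactly as in \dref{h}--\dref{u0}. First I would compute the closed-loop consensus-error dynamics: differentiating $\tilde{\xi}=(\widehat{\mathcal{L}}_1\otimes I_n)(x-\mathbf{1}\otimes x_0)$, substituting \dref{consf2}, and using $\widehat{\mathcal{L}}_1\mathbf{1}=-\widehat{\mathcal{L}}_2$, yields
\begin{equation*}
\begin{aligned}
\dot{\tilde{\xi}}=&(I_N\otimes A)\tilde{\xi}+[\widehat{\mathcal{L}}_1(D+\rho)\otimes BK]\tilde{\xi}\\
&-\beta(\widehat{\mathcal{L}}_1\otimes B)H(\tilde{\xi})+\widehat{\mathcal{L}}_2\otimes(Bu_0),
\end{aligned}
\end{equation*}
together with $\dot{d}_i=\tilde{\xi}_i^T\Omega\tilde{\xi}_i$, where $H(\tilde{\xi})\triangleq[h(B^TP^{-1}\tilde{\xi}_1)^T,\cdots,h(B^TP^{-1}\tilde{\xi}_N)^T]^T$.

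Next I would take the Lyapunov candidate
\begin{equation*}
V=\frac{1}{2}\sum_{i=1}^Ng_i(2d_i+\rho_i)\tilde{\xi}_i^TP^{-1}\tilde{\xi}_i+\frac{1}{2}\sum_{i=1}^Ng_i\tilde{d}_i^2,
\end{equation*}
with $G=\mathrm{diag}(g_1,\cdots,g_N)>0$ supplied by Lemma \ref{Mmatrix} so that $G\widehat{\mathcal{L}}_1+\widehat{\mathcal{L}}_1^TG\geq\lambda_0 I>0$, and $\tilde{d}_i=d_i-\alpha$. Differentiating along the trajectory, the coupling term contributes $-\tilde{\xi}^T[(D+\rho)(G\widehat{\mathcal{L}}_1+\widehat{\mathcal{L}}_1^TG)(D+\rho)\otimes P^{-1}BB^TP^{-1}]\tilde{\xi}$ after inserting $K=-B^TP^{-1}$, while the adaptive terms yield $\tilde{\xi}^T[(\rho+D-\alpha I)G\otimes P^{-1}BB^TP^{-1}]\tilde{\xi}$ since $\Omega=P^{-1}BB^TP^{-1}$. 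Bounding $G\widehat{\mathcal{L}}_1+\widehat{\mathcal{L}}_1^TG$ below by $\lambda_0 I$, applying Lemma \ref{ineq} to the cross terms and choosing $\alpha$ large (as in \dref{ineq1} and \dref{ineq3}, the estimate \dref{ineq2} being unnecessary here since there is no $e_0$), this part is dominated by $\tilde{\xi}^T[(D+\rho)G\otimes(P^{-1}A+A^TP^{-1}-2P^{-1}BB^TP^{-1})]\tilde{\xi}$, which is nonpositive by the LMI \dref{lmib}. For the remaining terms I would replicate \dref{h} and \dref{u0}: the identities $\tilde{\xi}_i^TP^{-1}Bh(B^TP^{-1}\tilde{\xi}_i)=\|B^TP^{-1}\tilde{\xi}_i\|$ and $\tilde{\xi}_i^TP^{-1}Bh(B^TP^{-1}\tilde{\xi}_j)\leq\|B^TP^{-1}\tilde{\xi}_i\|$ bound the discontinuous contribution by $-2\beta\sum_i(d_i+\rho_i)g_ia_{i0}\|B^TP^{-1}\tilde{\xi}_i\|$, while Cauchy--Schwarz with $\|u_0\|\leq\omega$ bounds the leader-input contribution by $2\omega\sum_i(d_i+\rho_i)g_ia_{i0}\|B^TP^{-1}\tilde{\xi}_i\|$; as $\beta\geq\omega$, their sum is nonpositive, whence $\dot{V}\leq0$.

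It then follows that $V$, $\tilde{\xi}$, and every $d_i$ are bounded, and since $\dot{d}_i\geq0$ each $d_i$ converges to a finite limit. Because $u_0(t)$ renders the closed loop time-varying, I would finish with Barbalat's lemma rather than LaSalle, as in the proof of Theorem \ref{thmlf1}: with $\chi\triangleq[(\sqrt{(D+\rho)G})\otimes I_n]\tilde{\xi}$ one integrates $\dot{V}\leq-\chi^T(I_N\otimes W)\chi$, where $W\triangleq2P^{-1}BB^TP^{-1}-P^{-1}A-A^TP^{-1}>0$, and checks boundedness of $\chi$ and $\dot{\chi}$ to obtain $\chi\to0$; since $\rho_i=\tilde{\xi}_i^TP^{-1}\tilde{\xi}_i\geq\lambda_{\min}(P^{-1})\|\tilde{\xi}_i\|^2$ makes $\|\chi\|^2$ dominate $\|\tilde{\xi}_i\|^4$, this gives $\tilde{\xi}\to0$ and hence leader-follower consensus.

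The step I expect to be the main obstacle is the joint handling of the discontinuous feedback $\beta H(\tilde{\xi})$ and the unmatched, unknown leader input $u_0$. One must exploit the exact sign pattern induced by $\widehat{\mathcal{L}}_1\mathbf{1}=-\widehat{\mathcal{L}}_2$ so that both contributions reduce to the same scalar weights $\|B^TP^{-1}\tilde{\xi}_i\|$ attached only to followers adjacent to the leader; only then does the choice $\beta\geq\omega$ cancel them into a nonpositive remainder. Everything else is a routine specialization of Theorems \ref{leaderless2} and \ref{thmlf1}.
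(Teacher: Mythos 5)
Your proposal is correct and takes essentially the same approach as the paper: the paper's own (largely omitted) proof uses exactly your Lyapunov function, namely $V_{5}=\frac{1}{2}\sum_{i=1}^{N}g_i(2d_i+\rho_i)\tilde{\xi}_i^TP^{-1}\tilde{\xi}_i+\frac{1}{2}\sum_{i=1}^{N}g_i\tilde{d}_i^2$, repeats the steps of Theorem \ref{thmlf1} with $P^{-1}$ in place of $S$ and without the $e_0$ estimate \dref{ineq2} (hence a smaller admissible $\alpha$), arrives at $\dot{V}_5\leq-\tilde{\xi}^T[(D+\rho)G\otimes (P^{-1}A+A^TP^{-1}-2P^{-1}BB^TP^{-1})]\tilde{\xi}\leq 0$, and concludes by Barbalat's lemma. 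You have simply supplied the details (closed-loop dynamics, the \dref{h}--\dref{u0} sign argument, and the quartic lower bound on $\|\chi\|^2$) that the paper omits for brevity.
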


\begin{proof}
Consider the following Lyapunov function candidate:
\begin{equation}\label{v5}
V_{5}=\sum_{i=1}^{N}\frac{1}{2}g_i(2d_i+\rho_i)\rho_i+\frac{1}{2}\sum_{i=1}^{N}g_i\tilde{d}_i^2,
\end{equation}
where $\lambda_0$ is defined as in \dref{dotlya3} and the rest of the variables are the same as in \dref{lya3}.
Then, by following
similarly steps in the proof of Theorem \ref{thmlf1} and selecting
$\alpha\geq\frac{5\lambda_{\max}(G)}{2\lambda_0}$, we can obtain that the derivative of $V_5$ satisfies
$$\begin{aligned}
\dot{V}_5\leq&-\tilde{\xi}^T[(D+\rho)G\otimes (P^{-1}A+A^TP^{-1}\\&-2P^{-1}BB^TP^{-1})]\tilde{\xi}\\
\leq&0.
\end{aligned}$$
The convergence of the consensus error $\tilde{\xi}$ can be established by the Barbalat's lemma.
The details are omitted here for conciseness. \hfill $\blacksquare$
\end{proof}


\subsection{Continuous Adaptive Consensus Protocols}

Note that the nonlinear function $h(\cdot)$ in the adaptive protocols
\dref{controller3} and \dref{consf2} is discontinuous. In practical implementation,
this discontinuity may result in chattering due to imperfections in switching devices
\cite{young1999control,edwards1998sliding}. 
One feasible
approach to eliminate chattering is to use the boundary layer
technique \cite{young1999control,edwards1998sliding} to
give a continuous approximation of
the discontinuous function $h(\cdot)$; in other words, we may
replace $h(\cdot)$ by a continuous function $\tilde{h}_i(\cdot)$,
defined such that
for $z\in\mathbf{R}^n$,
\begin{equation}\label{satu}
\tilde{h}_i(z)=\begin{cases}\frac{z}{\|z\|} & \text{if}~\|z\|>\kappa_i,\\
\frac{z}{\kappa_i} & \text{if}~\|z\|\leq\kappa_i,
\end{cases}
\end{equation}
where $\kappa_i$ is a small positive constant,
denoting the width of
the boundary layer of the protocol corresponding to the $i$-th follower.
As $\kappa_i\rightarrow 0$,
the continuous function $\tilde{h}_i(\cdot)$
approaches the discontinuous function $h(\cdot)$.
It is worth noting that using this continuous adaptive protocol
with $h(\cdot)$ replaced by $\tilde{h}_i(\cdot)$, the consensus error
$\tilde{\xi}$ will in general not converge to zero
but some small nonzero $\tilde{\xi}$ may result. In this case,
it can be observed from the last equation in \dref{controller3} that
the adaptive gains $d_i(t)$ will slowly grow unbounded.
To tackle this problem, we use
the so-called $\sigma$-modification technique
\cite{ioannou1984instability}
to modify the adaptive protocol \dref{controller3}.

Using the boundary layer concept and the $\sigma$-modification technique,
we propose a new distributed continuous adaptive consensus protocol to each follower as follows:
\begin{equation}\label{controller4}
\begin{aligned}
\dot{\tilde{v}}_i=&A\tilde{v}_i+Bu_i+F(C\tilde{v}_i-y_i),\\
\dot{\tilde{w}}_i=&A\tilde{w}_i+B[u_i-\beta \tilde{h}_i(B^TS(\tilde{\psi}_i-\tilde{\eta}_i))]\\&\quad+(d_i+\rho_i)FC(\tilde{\psi}_i-\tilde{\eta}_i)+F(C\tilde{v}_i-y_i),\\
u_i=&K[\tilde{w}_i-\beta \tilde{h}_i(B^TQ\tilde{\eta}_i)],\\
\dot{d}_i=&-\varphi_i (d_i-1)+(\tilde{\psi}_i-\tilde{\eta}_i)^TC^TC(\tilde{\psi}_i-\tilde{\eta}_i),i=1,\cdots,N,
\end{aligned}
\end{equation}
where $d_i(t)$ denotes the
time-varying coupling weight associated with the $i$-th follower
with $d_i(0)\geq1$,
$\varphi_i$ are small positive constants,
and the rest of the variables are defined as in \dref{controller3}.
Note that $d_i(0)\geq1$ and $\dot{d}_i\geq 0$ when $d_i=1$ in \dref{controller4}.
Then, it is not difficult to see that $d_i(t)\geq 1$ for any $t>0$.

By substituting the adaptive protocol \dref{controller4} into \dref{model1}, we can get the closed-loop dynamics of the network as
\begin{equation}\label{derror22}
\begin{aligned}
\dot{\tilde{\zeta}}=&[I_N\otimes (A+FC)]\tilde{\zeta},\\
\dot{\tilde{\varrho}}=&[I_N\otimes A+\widehat{\mathcal{L}}_1(D+\rho)\otimes FC]\tilde{\varrho}\\
&-(\widehat{\mathcal{L}}_1\otimes B)[\beta \tilde{H}(\tilde{\varrho})-\textbf{1}\otimes u_0]\\
&+(\widehat{\mathcal{L}}_1\otimes FC)(\textbf{1}\otimes e_0),\\
\dot{\tilde{\eta}}=&[I_N\otimes (A+BK)]\tilde{\eta}+(I_N\otimes BK)\tilde{\varrho}\\
&+(I_N\otimes FC)\tilde{\zeta}-(\widehat{\mathcal{L}}_1\otimes B)[\beta \tilde{M}(\tilde{\eta})+\textbf{1}\otimes u_0],\\
\dot{d}_i=&-\varphi_i (d_i-1)+\tilde{\varrho}_i^TC^TC\tilde{\varrho}_i,\quad i=1,\cdots,N,
\end{aligned}
\end{equation}
where $\widetilde{H}(\tilde{\varrho})=[\tilde{h}_1(B^TS\tilde{\varrho}_1)^T,\cdots,\tilde{h}_N(B^TS\tilde{\varrho}_N)^T]^T$,
$\widetilde{M}(\tilde{\eta})=[\tilde{h}_1(B^TQ\tilde{\eta}_1)^T,\cdots,\tilde{h}_N(B^TQ\tilde{\eta}_N)^T]^T$, and the rest of the variables are defined as in \dref{derror2}.

\begin{theorem}\label{thmlf2}
Suppose that Assumptions \ref{omega} and \ref{assp2} hold.
Then, the consensus error $\tilde{\xi}$ of \dref{derror22} and the adaptive gains $d_i,i=1,\cdots,N,$ are uniformly ultimately bounded under the adaptive protocol \dref{controller4} with $\beta$, $K$, $F$, and $Q$ designed as in Theorem \ref{thmlf1}.
Moreover, $\tilde{\xi}$ converges exponentially to the residual set
\begin{equation}\label{set}
\begin{aligned}
\mathcal{D}\triangleq\Bigg\{\tilde{\xi}: \|\tilde{\xi}\|^2<&\frac{\gamma_2(\alpha-1)^2}{2\lambda_{\min}(Q)\delta}\sum_{i=1}^N\varphi_ig_i+\frac{1}{\lambda_{\min}(Q)\delta}\\
\\&\times\sum_{i=1}^N[\gamma_2\Pi_i+(\omega a_{i0}+(2N-1)\beta)\kappa_i]\Bigg\},
\end{aligned}
\end{equation}
where $\alpha$ is defined as in \dref{alpha},
\begin{equation}\label{pi1}
\begin{aligned}
\Pi_i=&g_i(\omega a_{i0}+(2N-1)\beta)\kappa_i\\
&+(\frac{1}{\varphi_i}+\frac{\lambda_{\max}(S)}{2\lambda_{\min}(W)})(\omega a_{i0}+(2N-1)\beta)^2\kappa_i^2g_i.
\end{aligned}
\end{equation}
and
\begin{equation}\label{delta1}
\begin{aligned}
\delta=\min\{\frac{\lambda_{\min}(X)}{2\lambda_{\max}(Q)},\frac{\lambda_{\min}(W)}{\gamma_1\lambda_{\max}(S)},
\frac{\lambda_{\min}(W)}{2\gamma_2\lambda_{\max}(S)},\frac{\varphi_i}{4}\},
\end{aligned}
\end{equation}
with $\gamma_1$ and $\gamma_2$ defined in \dref{gamma12}, $W$ defined in \dref{dotlya3},
$X$ and $\Gamma$ defined in \dref{ineq4}.
\end{theorem}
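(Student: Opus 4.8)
The plan is to treat Theorem~\ref{thmlf2} as a \emph{perturbed} version of Theorem~\ref{thmlf1}, reusing the same composite Lyapunov function and tracking the two new effects introduced by \dref{controller4}: the boundary-layer approximation $\tilde{h}_i(\cdot)$ of \dref{satu} in place of the discontinuous $h(\cdot)$, and the $\sigma$-modification term $-\varphi_i(d_i-1)$ in the adaptive law. First I would observe that the $\tilde{\zeta}$-dynamics in \dref{derror22} are identical and autonomous, so $\tilde{\zeta}$ still decays exponentially to zero; hence it suffices to bound $\tilde{\varrho}$, $\tilde{\eta}$, $e_0$ and $\tilde{d}_i$. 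I would then recycle the Lyapunov function $V_4$ of \dref{lya4} (with its $V_3$ part as in \dref{lya3}), noting that $d_i(t)\ge 1$ keeps $V_4$ positive definite.

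Next I would redo the computation of $\dot{V}_4$ exactly as in the proof of Theorem~\ref{thmlf1}, but replacing the three exact identities/bounds of \dref{norm}, \dref{h}, \dref{u0} by their boundary-layer counterparts. The key elementary fact is that for $z=B^TS\tilde{\varrho}_i$ one has $z^T\tilde{h}_i(z)\ge\|z\|-\tfrac{1}{4}\kappa_i$ and $\|\tilde{h}_i(z)\|\le 1$, so the terms that cancelled exactly against the leader input $u_0$ and the diffusive coupling now cancel \emph{up to} an error bounded by a multiple of $\kappa_i$ (the same being true for the $\widetilde{M}(\tilde{\eta})$ terms in the $\tilde{\eta}$-equation). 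This converts the clean estimate \dref{dotlya33} into $\dot{V}_3\le-\tilde{\varrho}^T[(D+\rho)G\otimes W]\tilde{\varrho}-e_0^TW e_0+(\text{terms of order }\kappa_i)$, where the $\kappa_i$-residuals are split, via Lemma~\ref{ineq}, into a small fraction of the negative quadratic terms plus the constants collected in $\Pi_i$ of \dref{pi1}. In parallel, the $\sigma$-modification contributes $-\sum_i g_i\varphi_i\tilde{d}_i(d_i-1)$; using $d_i-1=\tilde{d}_i+(\alpha-1)$ and completing the square yields $-\tfrac12\sum_i g_i\varphi_i\tilde{d}_i^2+\tfrac12\sum_i g_i\varphi_i(\alpha-1)^2$, which supplies the negative definiteness in $\tilde{d}_i$ that the monotonicity argument provided in Theorem~\ref{thmlf1} and accounts for the $(\alpha-1)^2$ term in the residual set. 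I would also check that the extra factor $-g_i\varphi_i(d_i-1)\rho_i$ coming from $g_i\rho_i\dot{d}_i$ is harmless, being nonpositive since $d_i\ge 1$.

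With these estimates, I would assemble $\dot{V}_4=\dot{V}_{\tilde{\eta}}+\gamma_1\dot{V}_{\tilde{\zeta}}+\gamma_2\dot{V}_3$ and, choosing $\gamma_1,\gamma_2$ as in \dref{gamma12} and $\alpha$ as in \dref{alpha} so that the cross terms $2\tilde{\eta}^T(I_N\otimes QBK)\tilde{\varrho}$, $2\tilde{\eta}^T(I_N\otimes QFC)\tilde{\zeta}$ and the $e_0$-coupling are dominated exactly as in Theorem~\ref{thmlf1}, arrive at a dissipation inequality of the form $\dot{V}_4\le-\delta V_4+C$. Here each surviving negative quadratic block ($\tilde{\eta}$, $\tilde{\zeta}$, $e_0$, $\tilde{d}_i$) is bounded below by $\delta$ times its weight in $V_4$, which is why $\delta$ in \dref{delta1} is the minimum of the four displayed decay-rate-to-weight ratios; and $C$ is the constant gathering $\tfrac{\gamma_2}{2}(\alpha-1)^2\sum_i\varphi_i g_i$, $\gamma_2\sum_i\Pi_i$, and $\sum_i(\omega a_{i0}+(2N-1)\beta)\kappa_i$. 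Applying the comparison Lemma~\ref{comparison} gives $V_4(t)\le(V_4(0)-C/\delta)e^{-\delta t}+C/\delta$, establishing uniform ultimate boundedness of $\tilde{\varrho}$, $\tilde{\eta}$, $e_0$ and the gains $d_i$. Finally, since $\tilde{\zeta}\to0$ exponentially we have $\tilde{\xi}=\tilde{\eta}-\tilde{\zeta}\to\tilde{\eta}$, and $\lambda_{\min}(Q)\|\tilde{\eta}\|^2\le\tilde{\eta}^T(I_N\otimes Q)\tilde{\eta}\le V_4$ together with $V_4^\infty\le C/\delta$ yields the residual set $\mathcal{D}$ in \dref{set}.

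The main obstacle I anticipate is the careful bookkeeping of the boundary-layer residuals: I must track how each $\kappa_i$-order error generated in both $\widetilde{H}(\tilde{\varrho})$ and $\widetilde{M}(\tilde{\eta})$ propagates, isolate its linear-in-error cross terms against $e_0$, $\tilde{\varrho}$ and $\tilde{d}_i$, and split them by Young's inequality so that half of each relevant negative quadratic term is preserved for the decay estimate while the leftover constants assemble \emph{exactly} into $\Pi_i$ and $C$. Getting the weights $\gamma_1,\gamma_2$ and the threshold $\alpha$ to simultaneously dominate all cross couplings and yield the clean rate $\delta$ is the delicate part; by contrast, the exponential-decay step via Lemma~\ref{comparison} and the final passage from $V_4$ to $\|\tilde{\xi}\|$ are routine once the dissipation inequality is in hand.
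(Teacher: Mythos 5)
Your proposal follows essentially the same route as the paper's proof: the same composite Lyapunov functions $V_3$ of \dref{lya3} and $V_4$ of \dref{lya4}, the same observation that $\tilde{\zeta}$ decays autonomously, Young's inequality (Lemma \ref{ineq}) to absorb the boundary-layer residuals into fractions of the negative quadratic terms, the dissipation inequality $\dot{V}_4\leq-\delta V_4+C$ closed by the comparison lemma (Lemma \ref{comparison}), and the final passage $\lambda_{\min}(Q)\|\tilde{\eta}\|^2\leq V_4$ together with $\tilde{\xi}=\tilde{\eta}-\tilde{\zeta}$ to reach \dref{set}. One of your deviations is a genuine simplification: you replace the paper's three-case analysis (all $\|B^TS\tilde{\varrho}_i\|>\kappa_i$, all $\leq\kappa_i$, mixed) with the uniform inequality $z^T\tilde{h}_i(z)\geq\|z\|-\kappa_i/4$, which is correct (it is equivalent to $(\|z\|-\kappa_i/2)^2\geq0$ on the boundary layer) and handles all cases at once; your remark that the extra term $-g_i\varphi_i(d_i-1)\rho_i$ is nonpositive because $d_i\geq1$ is also a point the paper passes over silently.

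The one place your plan deviates in substance is the $\sigma$-modification bookkeeping, and it affects the stated constants. The paper bounds $-(d_i-1)\tilde{d}_i$ in two ways, \dref{d} and \dref{d1}, and averages them so as to retain \emph{both} $-\tfrac{\varphi_ig_i}{4}\tilde{d}_i^2$ and $-\tfrac{\varphi_ig_i}{4}(d_i-1)^2$. The second piece is not decorative: the boundary-layer residual enters multiplied by the a priori unbounded factor $(d_i+\rho_i)$, and in \dref{h2} its $(d_i-1)$-proportional part is absorbed precisely by $-\tfrac{\varphi_ig_i}{4}(d_i-1)^2$, while the $\rho_i$-proportional part is absorbed by half of the $\tilde{\varrho}$-quadratic; this is exactly what produces $\Pi_i$ in \dref{pi1}. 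Your single completion of the square yields only $-\tfrac12\varphi_ig_i\tilde{d}_i^2$, so you would be forced to split the $(d_i-1)$-proportional residual against $\tilde{d}_i^2$ via $d_i-1=\tilde{d}_i+(\alpha-1)$, which leaves an additional constant of order $(\alpha-1)(\omega a_{i0}+(2N-1)\beta)\kappa_ig_i$; since $\alpha$ in \dref{alpha} depends on the graph and can be large, the residual set you obtain has the same form as, but is in general larger than, $\mathcal{D}$ in \dref{set}, so the theorem with the specific $\Pi_i$ would not quite follow. This is a bookkeeping issue rather than a conceptual one: adopt the paper's averaged bound so that the $-(d_i-1)^2$ term is available, and the rest of your argument goes through and lands on the stated constants.
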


\begin{proof}
Consider the Lyapunov function $V_3$ as in \dref{lya3}.
The time derivative of $V_{3}$ along \dref{derror22} is given by
\begin{equation}\label{dotlya5}
\begin{aligned}
\dot{V}_3
\leq&\tilde{\varrho}^T[(D+\rho)G\otimes(SA+A^TS)-\lambda_0(D+\rho)^2\otimes C^TC\\
&+(\rho+D-\alpha I_N)G\otimes C^TC]\tilde{\varrho}-\gamma e_0^TWe_0\\
&-\tilde{\varrho}^T[(D+\rho)G\widehat{\mathcal{L}}_1\otimes SB][\beta \widetilde{H}(\tilde{\varrho})-\textbf{1}\otimes u_0]\\
&-2\tilde{\varrho}^T[(D+\rho)G\widehat{\mathcal{L}}_1\otimes C^TC](\textbf{1}\otimes e_0)\\
&-\sum_{i=1}^N\varphi_ig_i(d_i-1)\tilde{d}_i,
\end{aligned}
\end{equation}
where $\lambda_0$ and $W$ are defined as in \dref{dotlya3}.
Note that
\begin{equation}\label{d}
\begin{aligned}
-(d_i-1)\tilde{d}_i=-\tilde{d}_i^2-(\alpha-1)\tilde{d}_i\leq-\frac{1}{2}\tilde{d}_i^2+\frac{1}{2}(\alpha-1)^2,
\end{aligned}
\end{equation}
and
\begin{equation}\label{d1}
\begin{aligned}
-(d_i-1)\tilde{d}_i &=-(d_i-1)^2+(\alpha-1)(d_i-1)\\&\leq-\frac{1}{2}(d_i-1)^2+\frac{1}{2}(\alpha-1)^2.
\end{aligned}
\end{equation}
Substituting \dref{ineq1}, \dref{ineq2}, \dref{ineq3}, \dref{d}, and \dref{d1} into \dref{dotlya5}, we can obtain that
\begin{equation}\label{dotlya52}
\begin{aligned}
\dot{V}_3\leq&-\tilde{\varrho}^T[(D+\rho)G\otimes W]\tilde{\varrho}-e_0^TWe_0\\
&+\frac{(\alpha-1)^2}{2}\sum_{i=1}^N\varphi_ig_i-\sum_{i=1}^N\frac{\varphi_ig_i}{4}[(d_i-1)^2+\tilde{d}_i^2]\\&-\tilde{\varrho}^T[(D+\rho)G\widehat{\mathcal{L}}_1\otimes SB](\beta \widetilde{H}(\tilde{\varrho})-\textbf{1}\otimes u_0).
\end{aligned}
\end{equation}
Consider the following three cases.

i) $\|B^TS\tilde{\varrho}_i\| >\kappa_i, i=1,\cdots,N$.

In this case, as shown in \dref{h}, we can get that
\begin{equation}\label{h1}
\begin{aligned}
&-\tilde{\varrho}^T[(D+\rho)G\widehat{\mathcal{L}}_1\otimes SB]\beta \widetilde{H}(\tilde{\varrho})\\
&\qquad\leq-\sum_{i=1}^N(d_i+\rho_i)g_i\beta a_{i0}\|B^TS\tilde{\varrho}_i\|.
\end{aligned}
\end{equation}
Substituting \dref{u0} and \dref{h1} into \dref{dotlya52} yields
\begin{equation*}
\begin{aligned}
\dot{V}_3\leq&-\tilde{\varrho}^T[(D+\rho)G\otimes W
]\tilde{\varrho}-e_0^TWe_0\\
&-\sum_{i=1}^N\frac{\varphi_ig_i}{4}[(d_i-1)^2+\tilde{d}_i^2]+\frac{(\alpha-1)^2}{2}\sum_{i=1}^N\varphi_ig_i.
\end{aligned}
\end{equation*}

ii) $\|B^TS\tilde{\varrho}_i\| \leq\kappa_i$, $i=1,\cdots,N$.

In this case, it follows from \dref{u0} and \dref{satu} that
\begin{equation}\label{h2}
\begin{aligned}
-\tilde{\varrho}^T&[(D+\rho)G\widehat{\mathcal{L}}_1\otimes SB](\beta \widetilde{H}(\tilde{\varrho})-\textbf{1}\otimes u_0)\\
\leq&\sum_{i=1}^N(d_i+\rho_i)g_i(\omega a_{i0}+(2N-1)\beta)\|B^TS\tilde{\varrho}_i\|\\
\leq&\sum_{i=1}^N(d_i+\rho_i)g_i(\omega a_{i0}+(2N-1)\beta)\kappa_i\\
\leq&\sum_{i=1}^N\frac{\varphi_ig_i}{4}(d_i-1)^2+\sum_{i=1}^N\frac{\lambda_{\min}(W)}{2\lambda_{\max}(S)}g_i\rho_i^2+\sum_{i=1}^N\Pi_i\\
\leq&\sum_{i=1}^N\frac{\varphi_ig_i}{4}(d_i-1)^2+\frac{1}{2}\tilde{\varrho}^T(\rho G\otimes W)\tilde{\varrho}+\sum_{i=1}^N\Pi_i,
\end{aligned}
\end{equation}
where we have used Lemma \ref{ineq} to get the third inequality and
$\Pi_i$ is defined as in \dref{pi1}.
Substituting \dref{h2} into \dref{dotlya52} yields
\begin{equation}\label{dotlya53}
\begin{aligned}
\dot{V}_3\leq&-\frac{1}{2}\tilde{\varrho}^T[(D+\rho)G\otimes W]\tilde{\varrho}-e_0^TWe_0\\&-\sum_{i=1}^N\frac{\varphi_ig_i}{4}\tilde{d}_i^2+\frac{(\alpha-1)^2}{2}\sum_{i=1}^N\varphi_ig_i+\sum_{i=1}^N\Pi_i.
\end{aligned}
\end{equation}

iii) $\tilde{\varrho}$ satisfies neither Case i) nor Case ii).

Without loss of generality, assume that $\|B^TS\tilde{\varrho}_i\| >\kappa_i$, $i=1,\cdots,l$, and $\|B^TS\tilde{\varrho}_i\| \leq\kappa_i$, $i=l+1,\cdots,N$, where $2\leq l\leq N-1$. By combining \dref{h1} and \dref{h2}, in this case we can get that
\begin{equation}\label{h3}
\begin{aligned}
-\tilde{\varrho}^T&[(D+\rho)G\widehat{\mathcal{L}}_1\otimes SB](\beta \widetilde{H}(\tilde{\varrho})-\textbf{1}\otimes u_0)\\
&\leq\sum_{i=l+1}^N[\frac{\varphi_ig_i}{4}(d_i-1)^2+ \frac{\lambda_{\min}(W)}{2\lambda_{\max}(S)}g_i\rho_i^2+\Pi_i].
\end{aligned}
\end{equation}
Then, it follows from \dref{dotlya52} and \dref{h3} that
\begin{equation*}
\begin{aligned}
\dot{V}_3\leq&-\frac{1}{2}\tilde{\varrho}^T[(D+\rho)G\otimes W]\tilde{\varrho}-e_0^TWe_0\\&-\sum_{i=1}^N\frac{\varphi_ig_i}{4}\tilde{d}_i^2+\frac{(\alpha-1)^2}{2}\sum_{i=1}^N\varphi_ig_i
+\sum_{i=l+1}^N\Pi_i.
\end{aligned}
\end{equation*}

Therefore, by analyzing the above three cases, we get that $\dot{V}_3$ satisfies \dref{dotlya53} for all $\tilde{\varrho}\in\mathbf{R}^{Nn}$.

Next, consider the Lyapunov function $V_4$ in \dref{lya4}.
Similar to the discussion of three cases above, we have that
\begin{equation}\label{kappa}
\begin{aligned}
&-2\tilde{\eta}^T(\widehat{\mathcal{L}}_1\otimes QB)(\beta \widetilde{M}(\tilde{\eta})+\textbf{1}\otimes u_0)\\
&\quad\qquad\leq \sum_{i=1}^N[\omega a_{i0}+(2N-1)\beta]\kappa_i.
\end{aligned}
\end{equation}
By using \dref{ineq4}, \dref{dotlya53}, and \dref{kappa}, we can obtain
the time derivative of $V_{4}$ along \dref{derror2} as
\begin{equation}\label{dotlya61}
\begin{aligned}
\dot{V}_4\leq&-\frac{1}{2}\tilde{\eta}^T(I_N\otimes X)\tilde{\eta}-\tilde{\zeta}^T(I_N\otimes W)\tilde{\zeta}\\
&-\frac{1}{2}\tilde{\varrho}^T[(D+\rho)G\otimes W]\tilde{\varrho}-\gamma_2e_0^TWe_0\\&-\gamma_2\sum_{i=1}^N\frac{\varphi_ig_i}{4}\tilde{d}_i^2+\frac{\gamma_2(\alpha-1)^2}{2}\sum_{i=1}^N\varphi_ig_i
\\
&+\sum_{i=1}^N[\gamma_2\Pi_i+(\omega a_{i0}+(2N-1)\beta)\kappa_i].
\end{aligned}
\end{equation}
Furthermore, we rewrite \dref{dotlya61} into
\begin{equation}\label{dotlya62}
\begin{aligned}
\dot{V}_4\leq&-\delta V_4+\delta V_4-\frac{1}{2}\tilde{\eta}^T(I_N\otimes X)\tilde{\eta}-\tilde{\zeta}^T(I_N\otimes W)\tilde{\zeta}\\
&-\frac{1}{2}\tilde{\varrho}^T[(D+\rho)G\otimes W]\tilde{\varrho}-\gamma_2e_0^TWe_0-\gamma_2\sum_{i=1}^N\frac{\varphi_ig_i}{4}\tilde{d}_i^2\\
&+\sum_{i=1}^N[\frac{\gamma_2(\alpha-1)^2}{2}\varphi_ig_i+\gamma_2\Pi_i+(\omega a_{i0}+(2N-1)\beta)\kappa_i]\\
\leq&-\frac{1}{2}\tilde{\eta}^T[I_N\otimes (X-2\delta Q)]\tilde{\eta}-\tilde{\zeta}^T[I_N\otimes (W-\gamma_1\delta S)]\tilde{\zeta}\\
&-\frac{1}{2}\tilde{\varrho}^T[(D+\rho)G\otimes (W
-2\gamma_2\delta S)]\tilde{\varrho}-\gamma_2e_0^T(W-\delta S)e_0\\&-\gamma_2\sum_{i=1}^N(\frac{\varphi_i}{4}-\delta)g_i\tilde{d}_i^2+\frac{\gamma_2(\alpha-1)^2}{2}\sum_{i=1}^N\varphi_ig_i
\\
&+\sum_{i=1}^N[\gamma_2\Pi_i+(\omega a_{i0}+(2N-1)\beta)\kappa_i]-\delta V_4,
\end{aligned}
\end{equation}
where $\delta$ is defined as in \dref{delta1}.
By the definition of $\delta$, it follows from \dref{dotlya62} that
\begin{equation}\label{dotlya63}
\begin{aligned}
\dot{V}_4\leq&-\delta V_4+\frac{\gamma_2(\alpha-1)^2}{2}\sum_{i=1}^N\varphi_ig_i
\\&+\sum_{i=1}^N[\gamma_2\Pi_i+(\omega a_{i0}+(2N-1)\beta)\kappa_i].
\end{aligned}
\end{equation}
In light of Lemma \ref{comparison}, we can deduce from \dref{dotlya63} that 
$V_6$ exponentially converges to the residual set
\begin{equation}\label{D1}
\begin{aligned}
\mathcal{D}_1\triangleq\Bigg\{V_4: V_4<&\frac{\gamma_2(\alpha-1)^2}{2\delta}\sum_{i=1}^N\varphi_ig_i
\\&+\frac{1}{\delta}\sum_{i=1}^N[\gamma_2\Pi_i+(\omega a_{i0}+(2N-1)\beta)\kappa_i]\Bigg\}
\end{aligned}
\end{equation}
with a convergence rate faster than $e^{-\delta t}$. Since  $V_4\geq\lambda_{\min}(Q)\|\tilde{\eta}\|^2+\min g_i\,\gamma_2(\lambda_{\min}(S)\|\tilde{\varrho}\|^2+\frac{1}{2}\sum_{i=1}^N\tilde{d}_i^2)$,
it follows from \dref{D1} that $\tilde{\eta}$, $\tilde{\varrho}$, and
$d_i$ are uniformly ultimately bounded.
Moreover, since $\tilde{\zeta}$ asymptotically converges to zero, we can obtain that
$\tilde{\eta}$ and thereby $\tilde{\xi}$ exponentially converges to the residual set $\mathcal{D}$
with a convergence rate faster than $e^{-\delta t}$.\hfill $\blacksquare$
%
\end{proof}

For the special case where the relative states among neighboring agents are available,
the discontinuous state feedback adaptive protocol \dref{consf2} can be modified to be
\begin{equation}\label{consf2c}
\begin{aligned}
u_{i}&=(d_i+\rho_i)K\tilde{\xi}_i-\beta \tilde{h}_i(B^TP^{-1}\tilde{\xi}_i),\\
\dot{d}_i&=-\varphi_i (d_i-1)+\tilde{\xi}_i^T\Omega\tilde{\xi}_i,\quad  i=1,\cdots,N,
\end{aligned}
\end{equation}
where $\tilde{h}_i(\cdot)$ is defined as in \dref{satu} and $\varphi_i$ are small positive constants.

\begin{corollary}\label{corollary2}
Suppose that Assumptions \ref{omega} and \ref{assp2} hold.
Then, the consensus error $\tilde{\xi}$ and the adaptive gains $d_i,i=1,\cdots,N,$ are uniformly ultimately bounded under
the adaptive protocol \dref{consf2c} with $K$, $\Omega$, $\rho_i$, and $\beta$ designed as in Corollary \ref{corollary1}.
Moreover,
$\tilde{\xi}$ converges exponentially to the residual set
\begin{equation}\label{set1}
\begin{aligned}
\tilde{\mathcal{D}}\triangleq\Bigg\{\tilde{\xi}: \|\tilde{\xi}\|^2<&\frac{1}{\lambda_{\min}(P^{-1})\tilde{\delta}}\sum_{i=1}^N[\frac{1}{2}(\alpha-1)^2\varphi_ig_i+\tilde{\Pi}_i]\Bigg\},
\end{aligned}
\end{equation}
where $\alpha$ is defined as in \dref{v5},
$\tilde{\delta}=\min\{\frac{\lambda_{\min}(\tilde{X})}{2\lambda_{\max}(P^{-1})},\frac{\varphi_i}{4}\}$,
$$
\begin{aligned}
\tilde{\Pi}_i=&g_i(\omega a_{i0}+(2N-1)\beta)\kappa_i\\
&+(\frac{1}{\varphi_i}+\frac{\lambda_{\max}(P^{-1})}{2\lambda_{\min}(\tilde{X})})(\omega a_{i0}+(2N-1)\beta)^2\kappa_i^2g_i.
\end{aligned}
$$
with $P$ defined in \dref{lmib}, $\tilde{X}=-(P^{-1}A+A^TP^{-1}-2\Omega)$,
and $\Omega$ defined in Theorem \ref{leaderless2}.
\end{corollary}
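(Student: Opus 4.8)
The plan is to combine the Lyapunov computation of Corollary \ref{corollary1} for the dominant negative term with the boundary-layer and $\sigma$-modification machinery developed in the proof of Theorem \ref{thmlf2}. Because \dref{consf2c} is a pure state-feedback protocol, there are no observer errors to carry along and the closed-loop dynamics collapse to the single equation $\dot{\tilde{\xi}}=(I_N\otimes A)\tilde{\xi}+(\widehat{\mathcal{L}}_1(D+\rho)\otimes BK)\tilde{\xi}-\beta(\widehat{\mathcal{L}}_1\otimes B)\widetilde{H}(\tilde{\xi})+(\widehat{\mathcal{L}}_2\otimes B)u_0$, where $\widetilde{H}(\tilde{\xi})$ stacks the continuous functions $\tilde{h}_i(B^TP^{-1}\tilde{\xi}_i)$ of \dref{satu} and I have used $\widehat{\mathcal{L}}_1\textbf{1}=-\widehat{\mathcal{L}}_2$. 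I would take the Lyapunov candidate $V_5$ of \dref{v5} with $\rho_i=\tilde{\xi}_i^TP^{-1}\tilde{\xi}_i$ and differentiate it along this trajectory.

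First I would reproduce the estimate of Corollary \ref{corollary1} for the contributions of the linear drift and the $(d_i+\rho_i)K\tilde{\xi}_i$ feedback, which produces the dominant term $-\tilde{\xi}^T[(D+\rho)G\otimes\tilde{X}]\tilde{\xi}$ with $\tilde{X}=-(P^{-1}A+A^TP^{-1}-2\Omega)>0$ by the LMI \dref{lmib}. Compared with Corollary \ref{corollary1}, two new groups of terms survive: the $\sigma$-modification term $-\sum_{i=1}^N\varphi_ig_i(d_i-1)\tilde{d}_i$ coming from the modified $\dot{d}_i$, and the cross term $-\tilde{\xi}^T[(D+\rho)G\widehat{\mathcal{L}}_1\otimes P^{-1}B](\beta\widetilde{H}(\tilde{\xi})-\textbf{1}\otimes u_0)$ generated jointly by the smoothed nonlinearity and the leader input.

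Next I would bound these two groups. The $\sigma$-modification term is dispatched exactly by the completions of squares \dref{d} and \dref{d1}, which trade $(d_i-1)\tilde{d}_i$ for $-\tfrac12\tilde{d}_i^2$, $-\tfrac12(d_i-1)^2$ and the constant $\tfrac12(\alpha-1)^2$. For the cross term I would replicate the three-case analysis of Theorem \ref{thmlf2}: for indices with $\|B^TP^{-1}\tilde{\xi}_i\|>\kappa_i$ the identity $\tilde{\xi}_i^TP^{-1}B\,h(B^TP^{-1}\tilde{\xi}_i)=\|B^TP^{-1}\tilde{\xi}_i\|$ together with $\beta\geq\omega$ makes the nonlinear contribution dominate the leader input and the term becomes nonpositive; for indices with $\|B^TP^{-1}\tilde{\xi}_i\|\leq\kappa_i$ the contribution is bounded by $(d_i+\rho_i)g_i(\omega a_{i0}+(2N-1)\beta)\kappa_i$ and then split, via Young's inequality (Lemma \ref{ineq}), into $\tfrac14\varphi_ig_i(d_i-1)^2$, a piece absorbed by $\tfrac12\tilde{\xi}^T(\rho G\otimes\tilde{X})\tilde{\xi}$, and the constant $\tilde{\Pi}_i$; the mixed case is obtained by summing the two estimates over the respective index sets. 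Collecting all terms and choosing $\alpha$ as in Corollary \ref{corollary1} so that the completed squares dominate, I expect to reach $\dot{V}_5\leq-\tilde{\delta}V_5+\sum_{i=1}^N[\tfrac12(\alpha-1)^2\varphi_ig_i+\tilde{\Pi}_i]$, where the value of $\tilde{\delta}$ is forced by requiring $\tilde{X}-2\tilde{\delta}P^{-1}\geq0$ and $\tfrac{\varphi_i}{4}-\tilde{\delta}\geq0$.

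Finally, Lemma \ref{comparison} yields exponential convergence of $V_5$ to the level set $\{V_5<\tfrac1{\tilde{\delta}}\sum_{i}[\tfrac12(\alpha-1)^2\varphi_ig_i+\tilde{\Pi}_i]\}$ at a rate faster than $e^{-\tilde{\delta}t}$; the coercivity of $V_5$ in $\tilde{\xi}$, which follows from $d_i\geq1$ (hence $2d_i+\rho_i\geq2$) and $\rho_i\geq\lambda_{\min}(P^{-1})\|\tilde{\xi}_i\|^2$, then converts this into the residual set $\tilde{\mathcal{D}}$ of \dref{set1} and gives uniform ultimate boundedness of $\tilde{\xi}$ and of the gains $d_i$. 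The main obstacle, exactly as in Theorem \ref{thmlf2}, is the mixed case of the boundary-layer estimate: one must bound the cross term uniformly when the indices are split between inside and outside the layer, keep the leader-input cancellation intact on the outside indices, and arrange the Young's-inequality constants so that they assemble precisely into the stated $\tilde{\Pi}_i$ and $\tilde{\delta}$.
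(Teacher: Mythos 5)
Your proposal is correct and follows essentially the same route as the paper: the paper's own proof simply takes $V_5$ from \dref{v5} and invokes the steps of Theorem \ref{thmlf2} (the three-case boundary-layer estimate, the $\sigma$-modification completions of squares \dref{d}--\dref{d1}, and the comparison lemma) to reach \dref{dotlya532} and then the residual set, which is exactly what you carry out in detail. The only difference is that you make explicit the intermediate steps the paper omits "for conciseness," including the closed-loop error dynamics and the coercivity argument via $d_i\geq 1$, all of which are consistent with the paper's intended argument.
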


\begin{proof}
Consider the Lyapunov function $V_{5}$ in \dref{v5}. By following
similarly steps in the proof of Theorem \ref{thmlf2},
we can obtain that
\begin{equation}\label{dotlya532}
\begin{aligned}
\dot{V}_5\leq&-\frac{1}{2}\tilde{\xi}^T[(D+\rho)G\otimes \tilde{X}]\tilde{\xi}-\sum_{i=1}^N\frac{\varphi_ig_i}{4}\tilde{d}_i^2\\&+\frac{(\alpha-1)^2}{2}\sum_{i=1}^N\varphi_ig_i+\sum_{i=1}^N\tilde{\Pi}_i.
\end{aligned}
\end{equation}

The upper bound of the consensus error $\tilde{\xi}$ can be obtained
by following the last part of the proof of Theorem \ref{thmlf2}.
The details are omitted here for conciseness. \hfill $\blacksquare$
\end{proof}

\begin{remark}
It is worth mentioning that
implementing the $\sigma$-modification technique
to add $-\varphi_i (d_i-1)$ into \dref{controller4} or \dref{consf2c}
and using the boundary layer concept to derive continuous functions $\tilde{h}_i$
play a vital role to guarantee the ultimate
boundedness of the consensus error $\tilde{\xi}$
and the adaptive gains $d_i$.
We can observe from \dref{set} and \dref{set1} that
the upper bounds of the consensus error $\tilde{\xi}$ depend on the $\sigma$-modification parameters $\varphi_i$
and the boundary layer widths $\kappa_i$. In practice, we can choose $\varphi_i$ and $\kappa_i$
to be relatively small in order to guarantee
a small consensus error $\xi$.
\end{remark}

\begin{remark}
Compared to the previous related works \cite{cao2011distributed,li2012adaptiveauto,li2011trackingTAC}, which are applicable to
only undirected subgraphs among the followers, the results in this section solve
the distributed tracking problem in the presence of a leader with nonzero control input for general directed graphs.
It should be noted that even though no global information of the communication graph is
needed in the adaptive protocols,
the upper bound of the leader's control input is nonetheless required.
This latter requirement appears to be a limitation of the present adaptive protocols, albeit a modest one.
\end{remark}

\section{Simulation Examples}\label{s5}

In this section, we present numerical simulations to illustrate the effectiveness of the preceding theoretical results.

{\it Example 1}:
Consider a network of second-order integrators, described by \dref{model1}, with
$$x_i=\begin{bmatrix}
x_{i1}\\
x_{i2}\end{bmatrix}, ~ A=\begin{bmatrix}
0\quad 1\\
0\quad 0
\end{bmatrix}, ~B=\begin{bmatrix}
0\\
1
\end{bmatrix}, ~C=\begin{bmatrix}
1\quad 0
\end{bmatrix}.$$
The communication graph is given as in Fig. \ref{figgraph1}, which is strongly connected.

It is worth noting that for second-order integrators with directed graphs,
determining the parameters in existing linear consensus protocols
generally requires the Lapalican matrix's nonzero eigenvalues \cite{ren2008consensus,tuna2009conditions}.
Therefore, we will use the adaptive protocol \dref{controller1} to solve the consensus problem.
\begin{figure}[htbp]
\begin{center}
\includegraphics[width=2.0in]{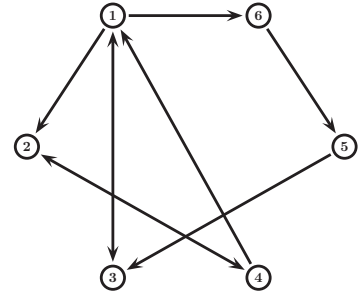}    
\caption{A strongly connected directed communication graph.}  
\label{figgraph1}                                 
\end{center}                                 
\end{figure}

Select $K=-\begin{bmatrix}0.8543 &  2.5628\end{bmatrix}$ such that $A+BK$ is Hurwitz.
Solving the LMI \dref{lmic} by using the LMI toolbox of Matlab, we obtain a feasible solution
$S=\begin{bmatrix}
0.5853 & -0.5853\\
-0.5853& 1.7559
\end{bmatrix}.$ The feedback gain matrix of \dref{controller1} is given by
$F=-\begin{bmatrix}2.5628\\0.8543  \end{bmatrix}$.
Let $d_i(0)=1$, $i=1,\cdots,6$. Then, with the adaptive protocol \dref{controller1}, the relative states $x_i-x_1$, $i=2,\cdots,6,$ of
the second-order integrators are
depicted in Fig. \ref{figerror1}. Evidently,
consensus is achieved. The adaptive coupling weights $d_i$ in \dref{controller1}
are shown in Fig. \ref{figdi10}, which converge to finite steady-state values.

\begin{figure}[htbp]
  \centering
  \includegraphics[width=2.3in]{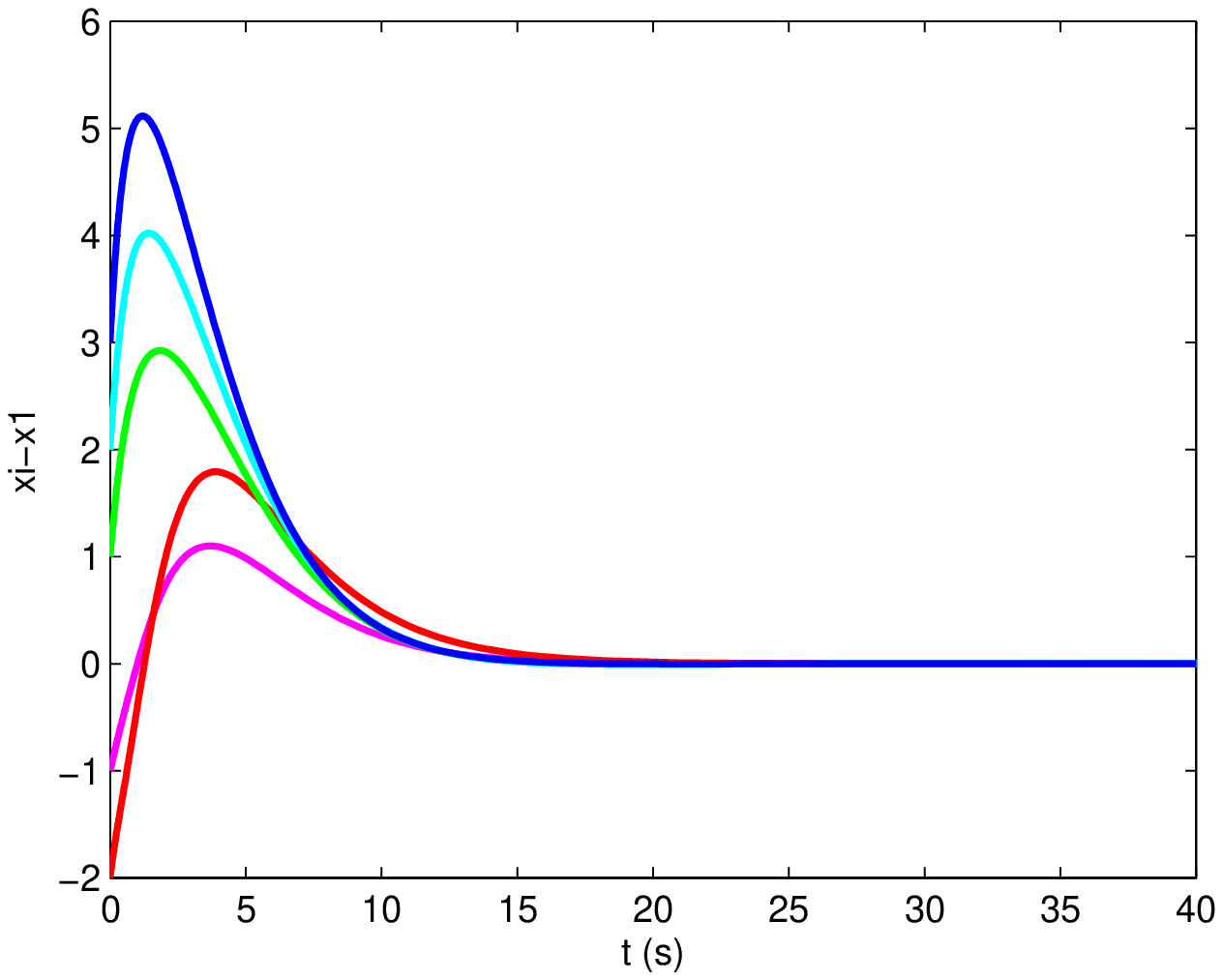}\\
  \caption{The consensus errors $x_i-x_1$, $i=2,\cdots,6$ under \dref{controller1}.}\label{figerror1}
\end{figure}

\begin{figure}[htbp]
\centering
\includegraphics[width=2.3in]{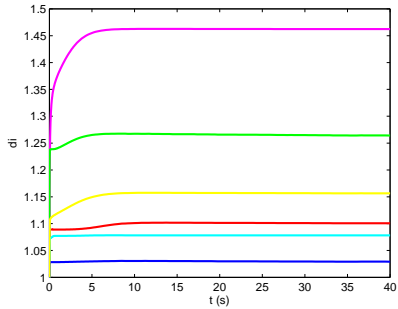}\\
\caption{The adaptive gains $d_i$ in \dref{controller1}.}\label{figdi10}
\end{figure}

\begin{figure}[htbp]
\begin{center}
\includegraphics[width=2.0in]{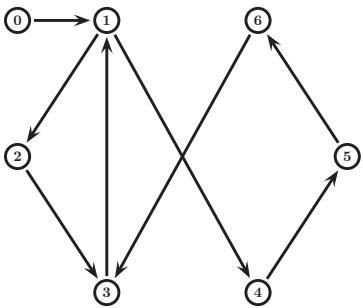}    
\caption{A leader-follower graph containing a directed spanning tree.}  
\label{graphlfd1}                                 
\end{center}                                 
\end{figure}

{\it Example 2}: Consider a network of heterogeneous agents consisting of a leader and several followers.
Let the leader be a nonlinear Chua's circuit, whose dynamics in the dimensionless
form are given by \cite{madan1993chua}
\begin{equation}\label{chua2}
\begin{aligned}
\dot{x}_{0} &= Ax_{0}+Bf_0(x_{0}),
\end{aligned}
\end{equation}
where
$$\begin{aligned}
x_0 &=\left[\begin{matrix} x_{01}\\x_{02}\\x_{03}\end{matrix}\right], A =\left[\begin{matrix}-a(m_0^1+1) & a & 0\\
1 & -1 &1 \\ 0 & -b & 0\end{matrix}\right], B=\left[\begin{matrix} 1 \\ 0 \\ 0\end{matrix}\right],\\
f_0(x_{0}) &=\frac{a}{2}(m_0^1-m_0^2)(|x_{01}+1|-|x_{01}-1|),
\end{aligned}$$
with $a>0$, $b>0$, $m_0^1<0$, and $m_0^2<0$ being the parameters of Chua's circuits.
Let $a=9$, $b= 18$, $m_0^1=-\frac{3}{4}$, and $m_0^2=-\frac{4}{3}$.
In this case, the leader displays a
double-scroll chaotic attractor \cite{madan1993chua}. The nonlinear term $f_0(x_{0})$ in \dref{chua2} is regarded as
the control input of the leader, which is bounded.
The followers are three-order linear systems, described by \dref{model1},
with $A$ and $B$ given in \dref{chua2}.

The communication graph among the agents are depicted in Fig. \ref{graphlfd1}, where the node indexed by
0 is the leader.
For simplicity, it is assumed that the relative state information of neighboring agents is available
and the continuous adaptive protocol \dref{consf2c} is used to achieve leader-follower consensus.
Solving the linear matrix inequality \dref{lmib} gives
$P=\left[\begin{matrix} 0.2403 &  -0.1467 &  -0.3444\\
   -0.1467  &  0.1459 &   0.0332\\
   -0.3444  &  0.0332 &   2.8821\end{matrix}\right].$
The feedback gain matrices of \dref{consf2c}, accordingly, is obtained as
$K=-\left[\begin{matrix} 2.8843  & 3.1711  &1.5114 \end{matrix}\right]$ and
$\Gamma=\left[\begin{matrix} 8.3194  &  9.1465  &  4.3594\\
    9.1465 &  10.0558  &  4.7928\\
    4.3594  &  4.7928  &  2.2843 \end{matrix}\right].$
The initial state of the leader is chosen as $x_0(0)=[1,0.8,-1.5]^T$
and the initial states of the agents are randomly chosen.
Select $\beta=10$, $\kappa=0.05$, and $\varphi_i=0.02$ in \dref{consf2c}.
The state trajectories $x_i(t)$ of the agents under
\dref{consf2c}, designed as above, are depicted in Fig. \ref{fig5},
demonstrating that leader-follower consensus
is indeed achieved.
The adaptive gains
$d_{i}$
in \dref{controller4} are shown in Fig. \ref{figdi1}, which
are clearly bounded.


\begin{figure}[htbp]\centering
\includegraphics[width=0.47\linewidth,height=0.35\linewidth]{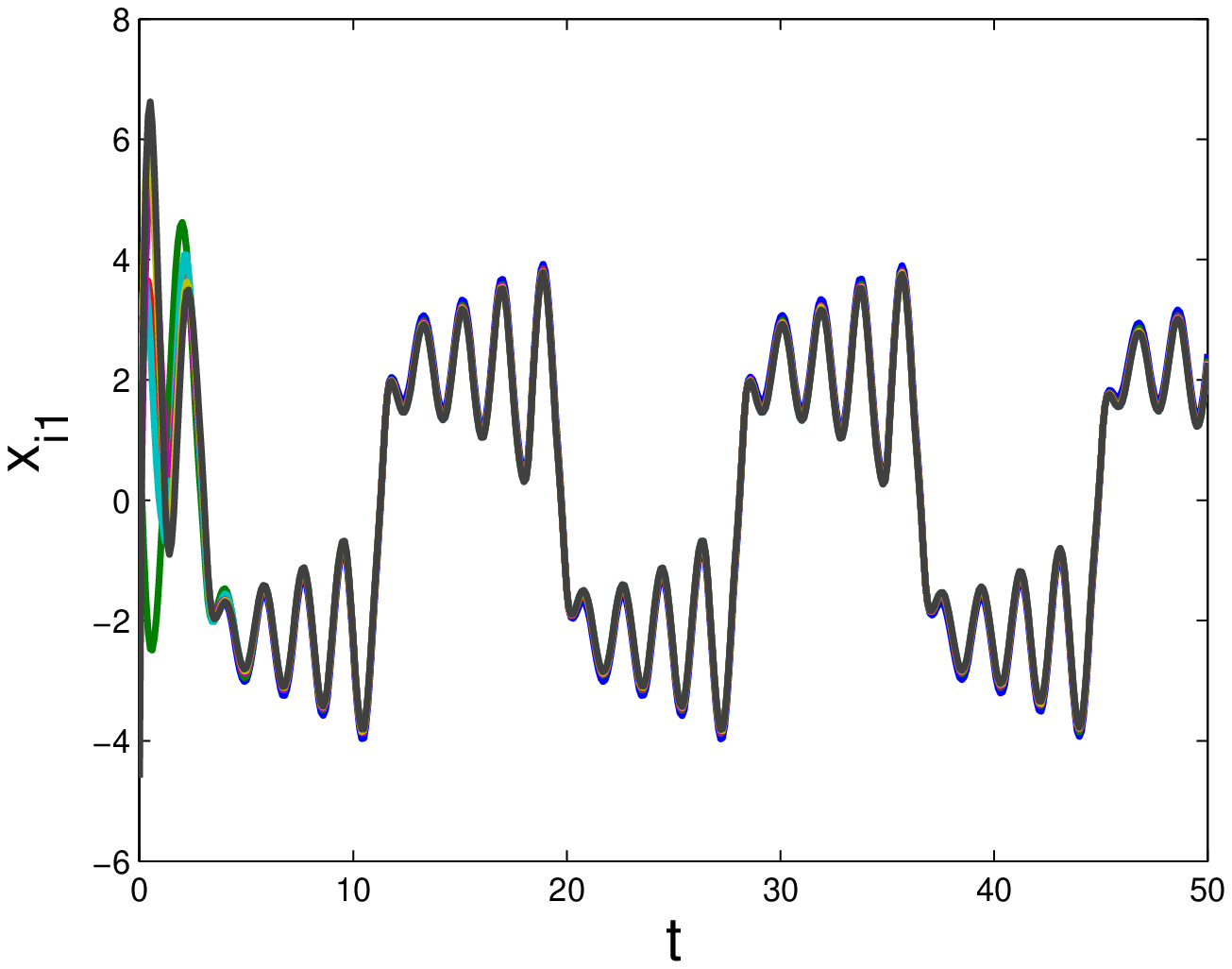}~
\includegraphics[width=0.47\linewidth,height=0.35\linewidth]{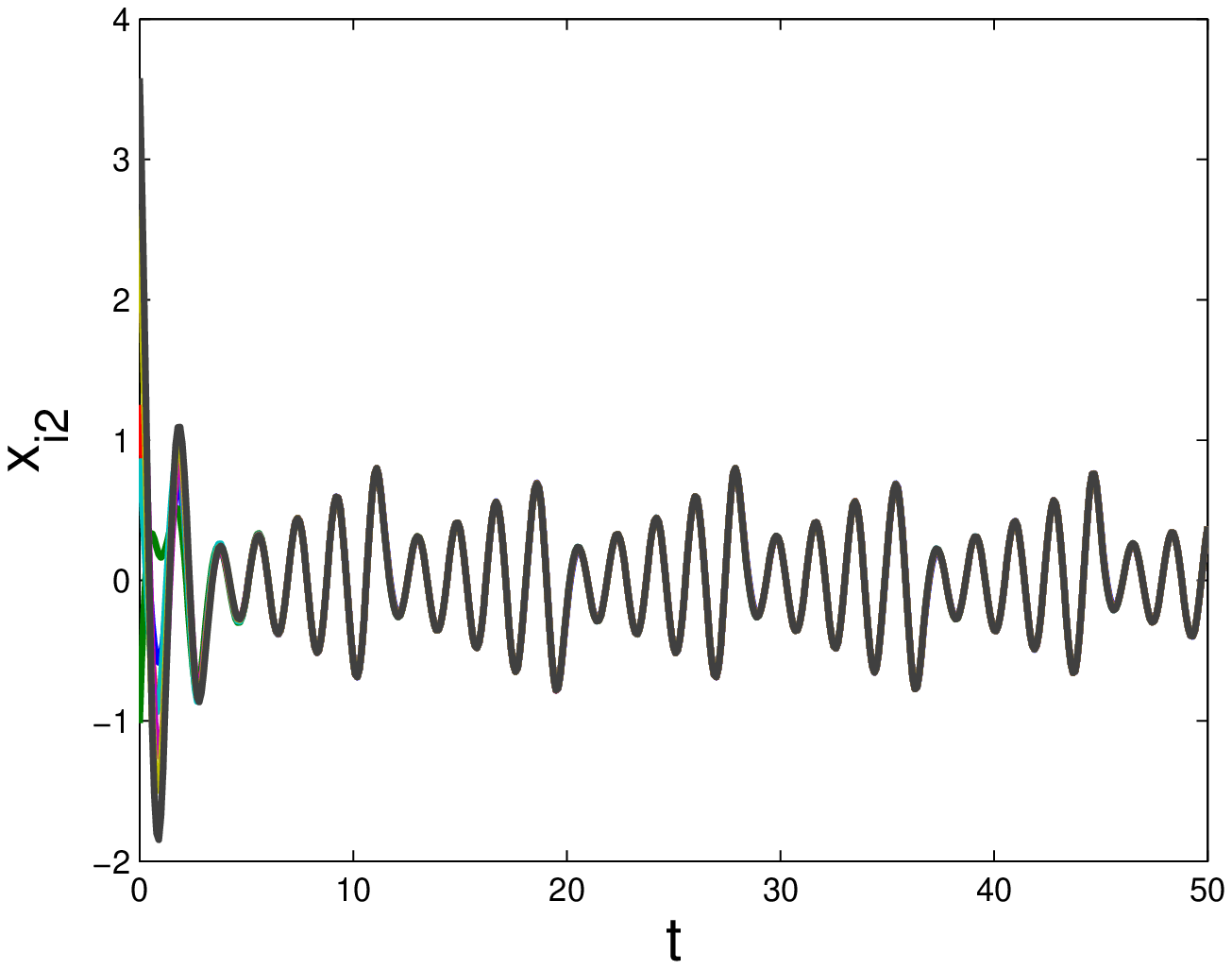}\\
\includegraphics[width=0.47\linewidth,height=0.35\linewidth]{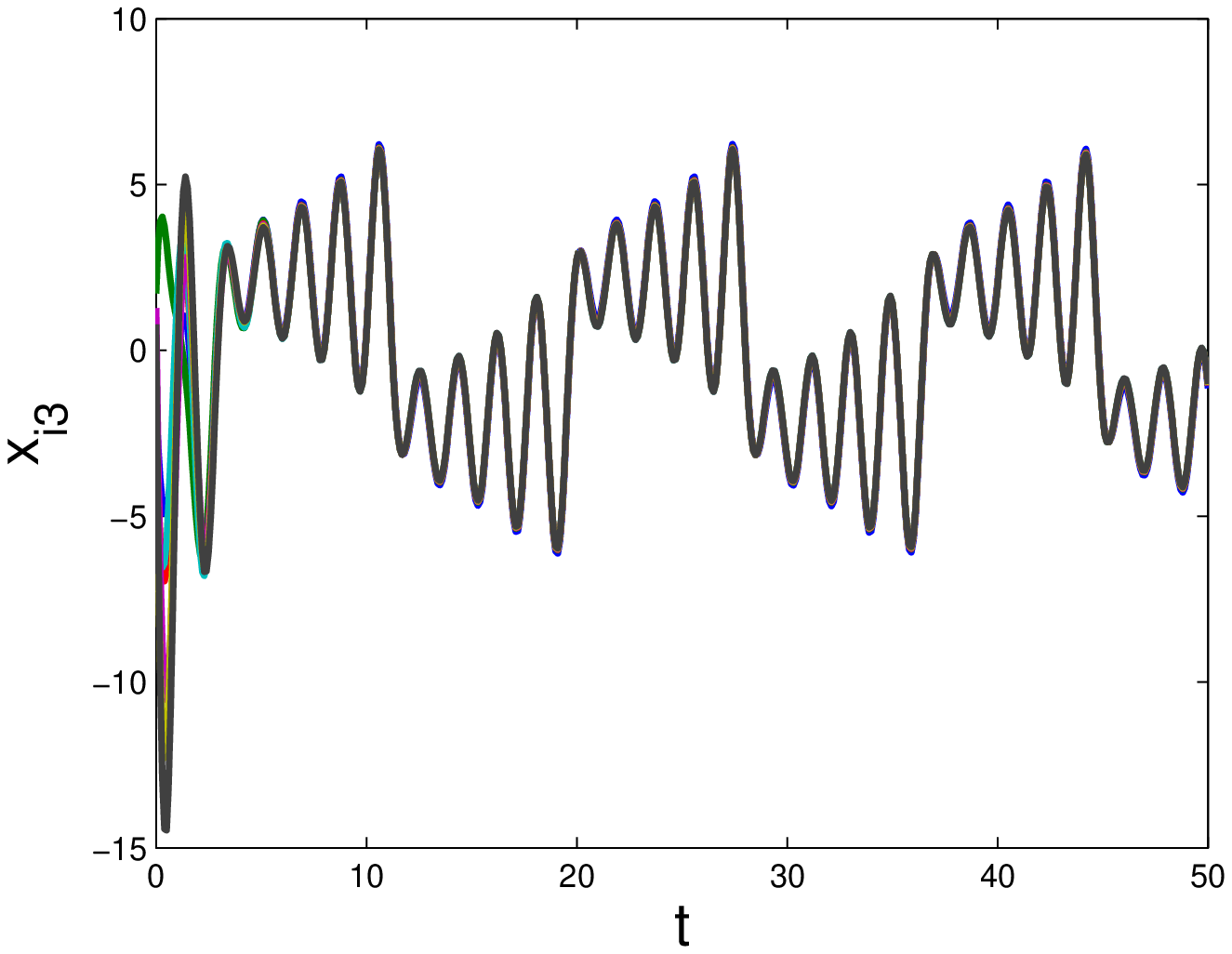}
\caption{The state trajectories of the leader and the followers under the adaptive protocol \dref{consf2c}. }\label{fig5}
\end{figure}

\begin{figure}[htbp]
\centering
\includegraphics[width=2.3in]{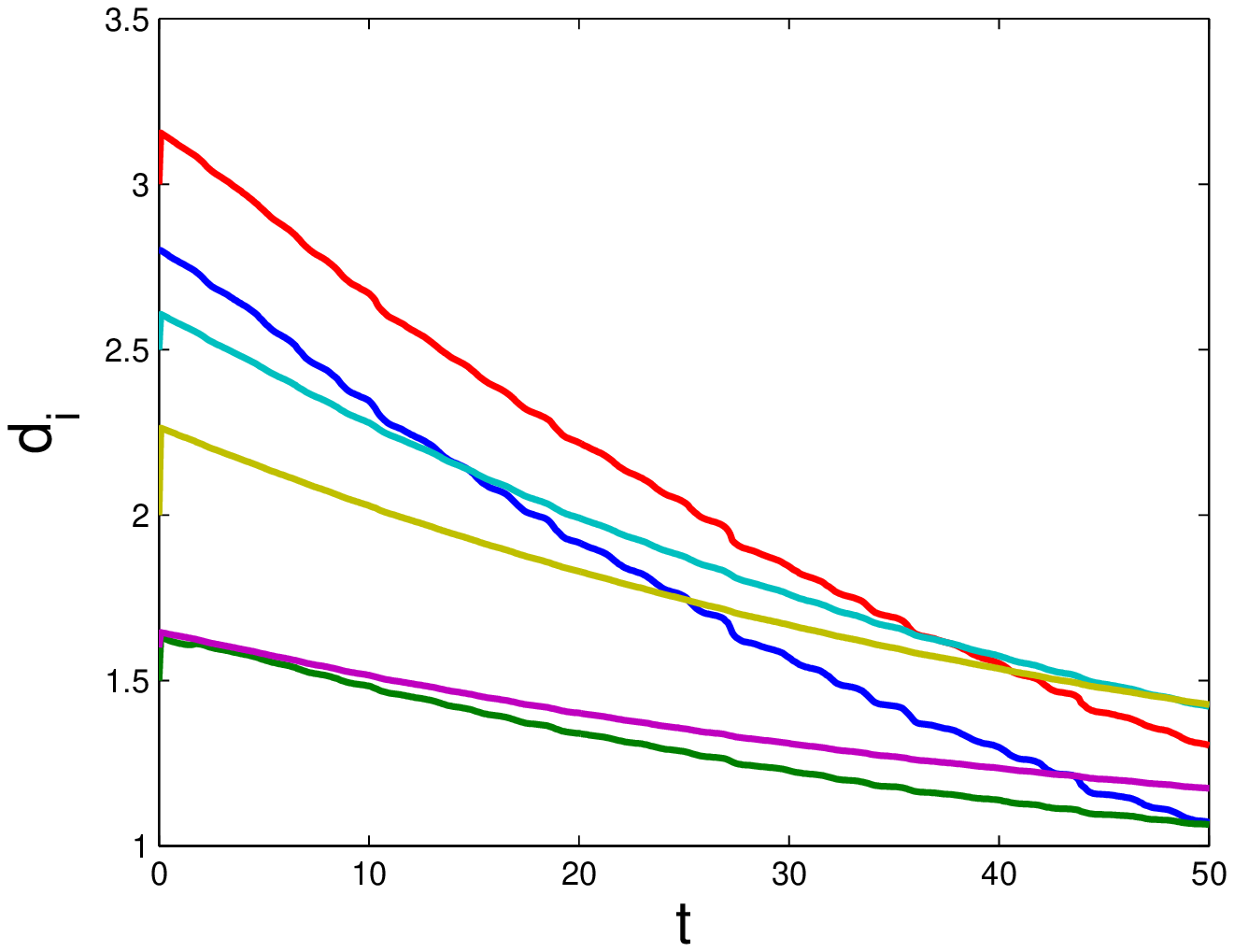}\\
\caption{The adaptive gains $d_i$ in \dref{consf2c}.}\label{figdi1}
\end{figure}

\section{Conclusion}\label{s6}

In this paper, we have addressed the distributed output feedback consensus protocol design problem for linear multi-agent systems with directed graph. One main contribution of this paper is that a new SOD method has been introduced to derive distributed adaptive output feedback consensus protocols, which can solve the leaderless consensus problem for linear multi-agent systems with general directed graphs and the leader-follower consensus problem for the case  with a leader of bounded control input.

The adaptive output feedback protocols for the leaderless case are independent of any global information of the communication graph,
and thereby are fully distributed. It should be mentioned that the adaptive output feedback protocols for the leader-follower case require
the upper bound of the leader's control input. 
This issue will be addressed in our future works.
Another future direction is to consider the case where the agents are non-introspective, i.e.,
having access to only the relative output information respect to their neighbors.


\end{document}